\newenvironment{lyxlist}[1]
	{\begin{list}{}
		{\settowidth{\labelwidth}{#1}
		 \setlength{\leftmargin}{\labelwidth}
		 \addtolength{\leftmargin}{\labelsep}
		 }}
	{\end{list}}
\theoremstyle{plain}
\newtheorem{thm}{\protect\theoremname}
\theoremstyle{plain}
\newtheorem{prop}[thm]{\protect\propositionname}
\theoremstyle{definition}
\newtheorem{defn}[thm]{\protect\definitionname}
\theoremstyle{remark}
\newtheorem{rem}[thm]{\protect\remarkname}
\theoremstyle{plain}
\newtheorem{lem}[thm]{\protect\lemmaname}
\theoremstyle{plain}
\newtheorem{cor}[thm]{\protect\corollaryname}
\date{}
\providecommand{\corollaryname}{Corollary}
\providecommand{\definitionname}{Definition}
\providecommand{\lemmaname}{Lemma}
\providecommand{\propositionname}{Proposition}
\providecommand{\remarkname}{Remark}
\providecommand{\theoremname}{Theorem}
\begin{document}

\title{Some Common Mistakes in the Teaching and Textbooks of Modal Logic}
\author{Xuefeng Wen\\
Institute of Logic and Cognition\\
Department of Philosophy\\
Sun Yat-sen University\\
wxflogic@gmail.com}
\maketitle
\begin{abstract}
We discuss four common mistakes in the teaching and textbooks of modal
logic. The first one is missing the axiom $\Diamond\varphi\leftrightarrow\neg\Box\neg\varphi$,
when choosing $\Diamond$ as the primitive modal operator, misunderstanding
that $\Box$ and $\Diamond$ are symmetric. The second one is forgetting
to make the set of formulas for filtration closed under subformulas,
when proving the finite model property through filtration, neglecting
that $\Box\varphi$ and $\Diamond\varphi$ may be abbreviations of
formulas. The third one is giving wrong definitions of canonical relations
in minimal canonical models that are unmatched with the primitive
modal operators. The final one is misunderstanding the rule of necessitation,
without knowing its distinction from the rule of modus ponens. To
better understand the rule of necessitation, we summarize six ways
of defining deductive consequence in modal logic: omitted definition,
classical definition, ternary definition, reduced definition, bounded
definition, and deflationary definition, and show that the last three
definitions are equivalent to each other.

\textbf{Keywords:} modal logic; deductive consequence; deduction theorem;
axiomatic systems; filtration; minimal canonical models; finite model
property
\end{abstract}

\section{Primitive Modal Operators and the d/Dual Axiom}

In modal logic, we can set $\Box$ as the primitive modal operator,
and define $\Diamond$ to be $\neg\Box\neg$ as a derived operator,
as in \citep{Chagrov1997}. We can also set $\Diamond$ as the primitive
operator, defining $\Box$ to be $\neg\Diamond\neg$, as in \citep{Blackburn2001a}.
Though the two options seem totally symmetric, they are actually not.
In constructing axiomatic systems, the choice of different primitive
modal operators may lead to results that are not totally symmetric.

For example, consider the minimal normal modal logic $\mathbf{K}$.
When $\Box$ is the primitive modal operator, apart from the axioms
and rules for classical propositional logic ($\mathrm{PC}$ henceforth),
the axiomatic system needs only include the following axiom (schema)
and rule of inference,
\begin{lyxlist}{00.00.0000}
\item [{$\mathrm{K}$}] $\Box(\varphi\to\psi)\to(\Box\varphi\to\Box\psi)$
\item [{$\mathrm{RN}$}] $\dfrac{\varphi}{\Box\varphi}$
\end{lyxlist}
as well as the definition $\Diamond\varphi=_{df}\neg\Box\neg\varphi$,
which amounts to adding the following axiom.
\begin{lyxlist}{00.00.0000}
\item [{$\mathrm{dual}$}] $\Diamond\varphi\leftrightarrow\neg\Box\neg\varphi$
\end{lyxlist}
But when $\Diamond$ is the primitive modal operator, apart from $\mathrm{PC},\mathrm{K},\mathrm{RN}$
and the following axiom used as a definition,
\begin{lyxlist}{00.00.0000}
\item [{$\mathrm{Dual}$}] $\Box\varphi\leftrightarrow\neg\Diamond\neg\varphi$
\end{lyxlist}
we still need to augment $\mathrm{dual}$. From another perspective,
if we do not take definitions as a part of the axiomatic system, when
choosing $\Box$ as primitive, apart from $\mathrm{PC}$ and $\mathrm{RN}$,
we need only one more axiom, namely $\mathrm{K}$; when choosing $\Diamond$
as primitive, however, we need two axioms, namely $\mathrm{K}$ and
$\mathrm{dual}$. Why is there such an asymmetry? The reason is that,
the axiom $\mathrm{K}$ and the rule $\mathrm{RN}$ are intrinsically
using $\Box$ rather than $\Diamond$. This makes the following rule
derivable from $\mathrm{K}$, $\mathrm{RN}$, and $\mathrm{dual}$
(plus $\mathrm{PC}$).
\begin{lyxlist}{00.00.0000}
\item [{$\mathrm{RE}$}] $\dfrac{\varphi\leftrightarrow\psi}{\Box\varphi\leftrightarrow\Box\psi}$
\end{lyxlist}
The derivation is as follows.
\begin{enumerate}
\item $\varphi\leftrightarrow\psi$\hfill hypothesis
\item $\varphi\to\psi$, $\psi\to\varphi$\hfill (1), PC
\item $\Box(\varphi\to\psi)$, $\Box(\psi\to\varphi)$\hfill (2), RN
\item $\Box(\varphi\to\psi)\to(\Box\varphi\to\Box\psi)$, $\Box(\psi\to\varphi)\to(\Box\psi\to\Box\varphi)$\hfill
K
\item $\Box\varphi\to\Box\psi$, $\Box\psi\to\Box\varphi$\hfill (3), (4),
PC
\item $\Box\varphi\leftrightarrow\Box\psi$\hfill (5), PC
\end{enumerate}
With $\mathrm{RE}$, $\mathrm{Dual}$ can be derived from $\mathrm{dual}$,
as shown below.
\begin{enumerate}
\item $\Diamond\neg\varphi\leftrightarrow\neg\Box\neg\neg\varphi$\hfill
dual
\item $\neg\Diamond\neg\varphi\leftrightarrow\Box\neg\neg\varphi$\hfill
(1), PC
\item $\varphi\leftrightarrow\neg\neg\varphi$\hfill PC
\item $\Box\varphi\leftrightarrow\Box\neg\neg\varphi$\hfill (3), RE
\item $\Box\varphi\leftrightarrow\neg\Diamond\neg\varphi$\hfill (2), (4),
PC
\end{enumerate}
Hence, when $\Box$ is primitive, there is no need to add $\mathrm{Dual}$
when $\mathrm{dual}$ is available. Dually, if we want to derive $\mathrm{dual}$
from $\mathrm{Dual}$, we need the following rule.
\begin{lyxlist}{00.00.0000}
\item [{$\mathrm{re}$}] $\dfrac{\varphi\leftrightarrow\psi}{\Diamond\varphi\leftrightarrow\Diamond\psi}$
\end{lyxlist}
With $\mathrm{re}$, we can analogously derive $\mathrm{dual}$ from
$\mathrm{Dual}$ as follows.
\begin{enumerate}
\item $\Box\neg\varphi\leftrightarrow\neg\Diamond\neg\neg\varphi$\hfill
Dual
\item $\neg\Box\neg\varphi\leftrightarrow\Diamond\neg\neg\varphi$\hfill
(1), PC
\item $\varphi\leftrightarrow\neg\neg\varphi$\hfill PC
\item $\Diamond\varphi\leftrightarrow\Diamond\neg\neg\varphi$\hfill (3),
re
\item $\Diamond\varphi\leftrightarrow\neg\Box\neg\varphi$\hfill (2), (4),
PC
\end{enumerate}
The problem, however, is that with only $\mathrm{K}$, $\mathrm{RN}$,
and $\mathrm{PC}$, even with $\mathrm{Dual}$, we can not obtain
$\mathrm{re}$. With $\mathrm{RE}$ (which can be obtained from $\mathrm{K}$,
$\mathrm{RN}$, and $\mathrm{PC}$) and $\mathrm{Dual}$, we can only
obtain
\[
\dfrac{\varphi\leftrightarrow\psi}{\Diamond\neg\varphi\leftrightarrow\Diamond\neg\psi},\dfrac{\varphi\leftrightarrow\psi}{\Diamond\neg\neg\varphi\leftrightarrow\Diamond\neg\neg\psi},\dfrac{\varphi\leftrightarrow\psi}{\Diamond\neg\neg\neg\varphi\leftrightarrow\Diamond\neg\neg\neg\psi}\cdots\cdots
\]
but not $\mathrm{re}$. Thus we can not derive $\mathrm{dual}$ from
$\mathrm{Dual}$. Thereby when $\Diamond$ is primitive, apart from
$\mathrm{Dual}$ used as a definition, we have to add $\mathrm{dual}$
too. Since we have proved that $\mathrm{Dual}$ can be derived from
$\mathrm{dual}$, the axiom $\mathrm{Dual}$ used as a definition
can be omitted. This implies that whichever of $\Box$ and $\Diamond$
is chosen as primitive, $\mathrm{dual}$ is indispensable (either
as a definition, or as an axiom), while $\mathrm{Dual}$ can be omitted.

Let $\tilde{\mathbf{K}}$ be the axiomatic system consisting of $\mathrm{PC}$,
$\mathrm{K}$, $\mathrm{RN}$, and $\mathrm{Dual}$. We will prove
rigorously that $\mathrm{dual}$ is not derivable in $\tilde{\mathbf{K}}$.
\begin{prop}
\label{prop:non-derivable}$\nvdash_{\tilde{\mathbf{K}}}\Diamond p\leftrightarrow\neg\Box\neg p$
\end{prop}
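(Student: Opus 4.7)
The plan is to exhibit a concrete semantic model in which every axiom instance of $\tilde{\mathbf{K}}$ is true at a chosen world $w$ and the rules $\mathrm{MP},\mathrm{RN}$ preserve truth at $w$, but in which $\Diamond p\leftrightarrow\neg\Box\neg p$ fails at $w$; by induction on derivations this will show the target formula is not a theorem. The guiding observation is exactly the one made in the paragraph preceding the proposition: $\mathrm{Dual}$ pins down $\Diamond\chi$ only when $\chi$ has $\neg$ as its outermost connective, and without the rule $\mathrm{re}$ we cannot transfer the constraint from $\Diamond\neg\neg p$ to $\Diamond p$ via $p\equiv\neg\neg p$. Accordingly, $\Diamond$ should be given a syntax-sensitive, non-extensional interpretation whose values on non-negated arguments are left free.

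Concretely, I would take the reflexive singleton $M=(\{w\},\{(w,w)\},V)$ with $V(p)=\{w\}$, use the standard Kripke clauses for the Booleans and for $\Box$ (so that $w\models\Box\varphi$ iff $w\models\varphi$), and set
\[
w\models\Diamond\varphi \ \Longleftrightarrow\ \varphi\text{ has the form }\neg\psi\text{ and }w\not\models\psi.
\]
In particular $w\not\models\Diamond\chi$ whenever the main connective of $\chi$ is not $\neg$. This is a well-founded inductive definition on formula complexity, and the $\Diamond$-clause is arranged precisely so that $\mathrm{Dual}$ becomes valid in one line.

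A routine induction on $\tilde{\mathbf{K}}$-derivations then shows that every theorem is true at $w$. The cases to check are: (i) every $\mathrm{PC}$ tautology holds because each formula gets a two-valued classical interpretation at $w$; (ii) every instance of $\mathrm{K}$ collapses at $w$ to $(\varphi\to\psi)\to(\varphi\to\psi)$; (iii) each instance of $\mathrm{Dual}$ holds, since $\neg\varphi$ has the form $\neg\psi$ with $\psi=\varphi$, giving $w\models\Diamond\neg\varphi$ iff $w\not\models\varphi$ iff $w\not\models\Box\varphi$; (iv) $\mathrm{MP}$ and $\mathrm{RN}$ preserve truth at $w$. Evaluating the target: $p$ is atomic and hence not of the form $\neg\psi$, so $w\not\models\Diamond p$, whereas $w\models p$ forces $w\not\models\Box\neg p$ and so $w\models\neg\Box\neg p$. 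The two sides of the biconditional disagree at $w$, which completes the argument.

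The principal difficulty is conceptual rather than computational: any extensional (e.g.\ Boolean-algebraic) interpretation of $\Diamond$ validating $\mathrm{Dual}$ would automatically identify $\Diamond\neg\neg p$ with $\Diamond p$ and thus also validate $\mathrm{dual}$, so no such approach can succeed. One has to be willing to introduce a syntax-sensitive clause for $\Diamond$; once that is accepted, the remaining verifications are straightforward case-checks. A small bookkeeping point worth noting is that, because the axioms of $\tilde{\mathbf{K}}$ are taken as schemas, the induction on derivations does not need to handle an explicit uniform-substitution rule---which matters here, since the non-standard $\Diamond$-clause is certainly not preserved by arbitrary substitutions.
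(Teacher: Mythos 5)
Your proposal is correct and is essentially the paper's own argument: you use the identical syntax-sensitive clause for $\Diamond$ (true only on arguments of the form $\neg\psi$), arranged precisely so that $\mathrm{Dual}$ holds while $\mathrm{dual}$ fails, together with the very same counter-model $(\{w\},\{(w,w)\},V)$ with $V(p)=\{w\}$. The only cosmetic difference is that the paper defines the non-standard semantics on arbitrary models and argues via validity over all frames before specializing, whereas you run the soundness induction directly in the single pointed model --- legitimate here because in the reflexive singleton $\mathrm{RN}$ preserves truth at $w$ --- and your closing remark about axiom schemas obviating a substitution rule is a sound observation the paper leaves implicit.
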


\begin{proof}
Consider $\mathfrak{M}=(W,R,V)$ and the following non-standard semantics.
\begin{itemize}
\item $\mathfrak{M},w\models\Diamond\varphi$ iff $\varphi=\neg\psi$ and
there exists $u\in W$ such that $wRu$ and $\mathfrak{M},u\not\models\psi$;
\item $\mathfrak{M},w\models\Box\varphi$ iff for all $u\in W$, $wRu$
implies $\mathfrak{M},u\models\varphi$.
\end{itemize}
The semantics of propositional connectives is defined as usual. It
is easily verified that $\mathrm{PC}$ and $\mathrm{K}$ are valid
under this semantics (with respect to the class of all frames), and
$\mathrm{MP}$ and $\mathrm{RN}$ preserves validity. Notice that
$\mathfrak{M},w\models\Diamond\neg\varphi$ iff there exists $u\in W$
such that $wRu$ and $\mathfrak{M},u\not\models\varphi$. Thus $\mathrm{Dual}$,
namely $\Box\varphi\leftrightarrow\neg\Diamond\neg\varphi$ is also
valid under this semantics. Hence, if $\vdash_{\tilde{\mathbf{K}}}\Diamond p\leftrightarrow\neg\Box\neg p$,
then $\Diamond p\leftrightarrow\neg\Box\neg p$ is valid too. But
consider the counter-model $\mathfrak{M}=(\{w\},\{(w,w)\},V)$, where
$V(p)=\{w\}$. Then $\mathfrak{M},w\models\neg\Box\neg p$ but $\mathfrak{M},w\not\models\Diamond p$.
Therefore, $\nvdash_{\tilde{\mathbf{K}}}\Diamond p\leftrightarrow\neg\Box\neg p$.
\end{proof}
Early textbooks in modal logic (such as \citep{Segerberg1971,Lemmon1977,Hughes1996})
usually take $\Box$ to be primitive, with $\mathrm{dual}$ a definition
rather than an axiom added to the axiomatic systems. Moreover, $\mathrm{Dual}$
is not required as an axiom. The prevalence of \citep{Blackburn2001a}
makes more and more people choose $\Diamond$ to be primitive. They
may take for granted from duality that only $\mathrm{Dual}$ should
be added as a definition and $\mathrm{dual}$ is not required as an
axiom. The above analysis shows that this thought is incorrect. If
$\Diamond$ is taken as primitive, the axiomatic system with only
$\mathrm{Dual}$ and without $\mathrm{dual}$ is incomplete. A newly
published textbook \citep{Pacuit2017} on neighborhood semantics just
made this mistake. On page 54, the author claims that the minimal
modal logic $\mathbf{E}$ under neighborhood semantics can be axiomatized
by $\mathrm{PC}$, $\mathrm{RE}$, and $\mathrm{Dual}$. But a slight
modification of the proof of Proposition~\ref{prop:non-derivable}
will show that this axiomatic system is incomplete, as $\mathrm{dual}$
is not derivable in the system. The correct axiomatization is to take
$\mathrm{dual}$ instead of $\mathrm{Dual}$ as an axiom.

In semantics, the choice of primitive modal operators will affect
the definition of filtration and minimal canonical models, as well
as the syntax of subformulas. Without caution, some subtle mistakes
are likely to be made, which will be shown in Section 2 and 3.

Given a set of propositional variables $PV$, without other specification,
we assume the language of modal logic is defined as follows.

\[
\mathcal{L}_{\Diamond}\ni\varphi::=p\mid\neg\varphi\mid(\varphi\to\varphi)\mid\Diamond\varphi,
\]
where $p\in PV$. The other logical connectives ($\top,\lor,\land,\leftrightarrow$)
are defined as usual.

\section{Filtration and Finite Model Property}

The basic idea of filtration is as follows. Given a formula $\varphi$
and its counter-model $\mathfrak{M}$, the satisfiability of $\varphi$
only depends on the satisfiability of the subformulas of $\varphi$.
Since the subformulas are finite, the possibilities of the satisfiability
of them are also finite. Thus, if we take those points in $\mathfrak{M}$
that satisfy the same subformulas of $\varphi$ to be the same, we
obtain a finite model $\mathfrak{M}^{f}$. If we define in $\mathfrak{M}^{f}$
an accessibility relation that is closely related to $\mathfrak{M}$
such that the satisfiability of the subformulas in $\mathfrak{M}^{f}$
is equivalent to that in $\mathfrak{M}$ for all subformulas of $\varphi$,
then we obtain a finite counter-model of $\varphi$.

Given a model $\mathfrak{M}=(W,R,V)$ and a set of formulas $\Sigma\subseteq\mathcal{L}_{\Diamond}$,
we first define an equivalence relation $\sim_{\Sigma}\ \subseteq W\times W$
as follows.
\[
w\sim_{\Sigma}w'\mbox{ iff for all }\varphi\in\Sigma,\mathfrak{M},w\models\varphi\Leftrightarrow\mathfrak{M},w'\models\varphi.
\]
Define the equivalence class $|w|_{\Sigma}$ of $w$ as follows.\footnote{We often omit the subscript $\Sigma$ if no confusion occurs.}
\[
|w|_{\Sigma}=\{w'\in W\mid w\sim_{\Sigma}w'\}
\]
When $\Sigma$ is finite, the set of equivalence classes induced by
$\Sigma$ is also finite. When $\Sigma$ is used for filtration, it
is required to be subformula closed, i.e., every subformula of every
formula in $\Sigma$ is also in $\Sigma$. In the sequel, we denote
by $Sub\Sigma$ the set of all subformulas of all formulas in $\Sigma$.
If $\Sigma$ is a singleton $\{\varphi\}$, we denote $Sub\Sigma$
by $Sub\varphi$. Now recall the definition of filtration.
\begin{defn}
[filtration]\label{def:filtration}$\mathfrak{M}^{f}=(W^{f},R^{f},V^{f})$
is a filtration of $\mathfrak{M}=(W,R,V)$ through $\Sigma$, if the
following conditions are satisfied.
\begin{enumerate}
\item $W^{f}=\{|w|_{\Sigma}\mid w\in W\}$;
\item for all $w,u\in W$, if $wRu$ then $|w|R^{f}|u|$;
\item for all $w,u\in W$, if $|w|R^{f}|u|$, then for all $\Diamond\varphi\in\Sigma$,
if $\mathfrak{M},u\models\varphi$ then $\mathfrak{M},w\models\Diamond\varphi$;
\item for all $w\in W$ and $p\in PV\cap\Sigma$, $|w|\in V^{f}(p)$ iff
$w\in V(p)$.
\end{enumerate}
\noindent{}Such an $R^{f}$ is also called a filtration of $R$ (through
$\Sigma$).
\end{defn}

A few remarks about filtration.
\begin{rem}
\label{rem:filtration}If $\Sigma$ is subformula closed and the primitive
modal operator is $\Diamond$, then $(3)$ above implies 
\begin{enumerate}
\item [$(3')$] For all $w,u\in W$, if $|w|R^{f}|u|$, then for all $\Box\varphi\in\Sigma$,
if $\mathfrak{M},w\models\Box\varphi$ then $\mathfrak{M},u\models\varphi$.
\end{enumerate}
But $(3')$ does not imply $(3)$. The reason is that $\Box\varphi$
is actually an abbreviation of $\neg\Diamond\neg\varphi$. When $\Box\varphi\in\Sigma$,
by the subformula closure of $\Sigma$, we also have $\neg\varphi\in\Sigma$,
and thus we could prove $(3')$ using $(3)$ by contraposition. But
conversely, when $\Diamond\varphi\in\Sigma$, we can not obtain $\neg\varphi\in\Sigma$,
and thus can not prove $(3)$ using $(3')$ by contraposition. On
this point, filtrations are different from canonical models and ultrafilter
extensions. In canonical models and ultrafilter extensions, the definitions
using $\Box$ can dually be replaced by $\Diamond$, and vice versa.
More precisely, the canonical relation $R^{\Lambda}$ for the logic
$\Lambda$ can be defined by: $R^{\Lambda}wu$ iff for all $\varphi\in u$,
$\Diamond\varphi\in w$. It can also be defined by: $R^{\Lambda}wu$
iff for all $\Box\varphi\in w$, $\varphi\in u$. The two definitions
are completely equivalent (assuming $\Lambda$ contains $\mathrm{dual}$).
Analogously, the accessibility relation $R^{ue}$ of the ultrafilter
extension of $\mathfrak{M}=(W,R,V)$ can be defined by: $R^{ue}wu$
iff for all $X\in v$, $\Diamond_{R}X\in u$.\footnote{$\Diamond_{R}X=\{w\in W\mid\exists x\in X\ Rwx\}$, $\Box_{R}X=\{w\in W\mid\forall x\in W(Rwx\Rightarrow x\in X)\}$.}
It can also be defined by: $R^{ue}wu$ iff for all $\Box_{R}X\in u$,
$X\in v$. The two definitions are also equivalent. But for filtrations,
(3) and ($\mathrm{3}'$) are not equivalent.
\end{rem}

\begin{rem}
If the primitive modal operator is $\Box$, for the inductive proof
of the case $\Box\varphi$ for the filtration theorem below, $(3)$
in Definition~\ref{def:filtration} should be replaced by $(3')$.
Then $(3')$ implies $(3)$ but $(3)$ does not imply $(3')$. The
reasons are as above.
\end{rem}

\begin{rem}
If both $\Box$ and $\Diamond$ are primitive modal operators, for
the inductive proof of the cases $\Box\varphi$ and $\Diamond\varphi$
for the filtration theorem below, $(3)$ in Definition~\ref{def:filtration}
should be replaced by $(3'')$.
\begin{enumerate}
\item [$(3'')$] For all $w,u\in W$, if $|w|R^{f}|u|$, then for all
$\Diamond\varphi\in\Sigma$, if $\mathfrak{M},u\models\varphi$ then
$\mathfrak{M},w\models\Diamond\varphi$, and for all $\Box\varphi\in\Sigma$,
if $\mathfrak{M},w\models\Box\varphi$ then $\mathfrak{M},u\models\varphi$.
\end{enumerate}
i.e., the conjunction of $(3)$ and $(3'')$.
\end{rem}

Note that the definition of $R^{f}$ in Definition~\ref{def:filtration}
is not constructive. The following two particular $R^{f}$s are often
used.
\begin{defn}
[smallest filtration, largest filtration]\label{def:minimal-maximal-filtration}Given
a model $\mathfrak{M}=(W,R,V)$ and a set of formulas $\Sigma\subseteq\mathcal{L}_{\Diamond}$,
the smallest filtration $R^{s}$ and the largest filtration $R^{l}$
of $R$ through $\Sigma$ is defined respectively as follows. For
all $w,u\in W$,
\begin{enumerate}
\item $|w|R^{s}|u|$ iff there exist $w'\in|w|$ and $u'\in|u|$ such that
$w'Ru'$;
\item $|w|R^{l}|u|$ iff for all $\Diamond\varphi\in\Sigma$, if $\mathfrak{M},u\models\varphi$
then $\mathfrak{M},w\models\Diamond\varphi$.
\end{enumerate}
Similarly, when the primitive modal operator is $\Box$, $(2)$ should
be replaced by $(2')$ below.
\begin{enumerate}
\item [(2')]$|w|R^{l}|u|$ iff for all $\Box\varphi\in\Sigma$, if $\mathfrak{M},w\models\Box\varphi$
then $\mathfrak{M},u\models\varphi$.
\end{enumerate}
If both $\Box$ and $\Diamond$ are primitive, then $(2)$ should
be replaced by $(2'')$ below.
\begin{enumerate}
\item [$(2'')$]$|w|R^{l}|u|$ iff for all $\Diamond\varphi\in\Sigma$,
if $\mathfrak{M},u\models\varphi$ then $\mathfrak{M},w\models\Diamond\varphi$
and for all $\Box\varphi\in\Sigma$, if $\mathfrak{M},w\models\Box\varphi$
then $\mathfrak{M},u\models\varphi$.
\end{enumerate}
\end{defn}

The equivalence of satisfiability of related formulas between a model
and its filtration is ensured by the following filtration theorem.
\begin{thm}
[Filtration Theorem]Let $\mathfrak{M}^{f}=(W^{f},R^{f},V^{f})$
be a filtration of $\mathfrak{M}=(W,R,V)$ through $\Sigma$, which
is subformula closed. Then for all $w\in W$ and $\varphi\in\Sigma$,
\[
\mathfrak{M},w\models\varphi\text{ iff }\mathfrak{M}^{f},|w|\models\varphi.
\]
\end{thm}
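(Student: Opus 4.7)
The plan is to proceed by induction on the complexity of $\varphi\in\Sigma$, using the subformula closure of $\Sigma$ at each step so that every subformula encountered still lies in $\Sigma$ and conditions (3) and (4) of Definition~\ref{def:filtration} remain applicable to them. The language is $\mathcal{L}_{\Diamond}$, so there are exactly four clauses to handle: propositional variables, negation, implication, and $\Diamond$.

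For the base case $\varphi=p\in PV$, subformula closure puts $p$ in $PV\cap\Sigma$, so condition (4) gives $\mathfrak{M},w\models p$ iff $w\in V(p)$ iff $|w|\in V^{f}(p)$ iff $\mathfrak{M}^{f},|w|\models p$. For $\varphi=\neg\psi$ and $\varphi=\psi\to\chi$, subformula closure places the immediate subformulas back in $\Sigma$, and the standard Boolean truth clauses commute with the induction hypothesis; these cases require no use of $R^{f}$ at all.

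The essential case is $\varphi=\Diamond\psi$, with $\psi\in\Sigma$ by closure. For the forward direction, suppose $\mathfrak{M},w\models\Diamond\psi$ and fix a witness $u$ with $wRu$ and $\mathfrak{M},u\models\psi$. Condition (2) of Definition~\ref{def:filtration} gives $|w|R^{f}|u|$, and the induction hypothesis on $\psi$ upgrades $\mathfrak{M},u\models\psi$ to $\mathfrak{M}^{f},|u|\models\psi$, whence $\mathfrak{M}^{f},|w|\models\Diamond\psi$. For the converse, suppose $\mathfrak{M}^{f},|w|\models\Diamond\psi$ and pick a class $|v|$ with $|w|R^{f}|v|$ and $\mathfrak{M}^{f},|v|\models\psi$. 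The induction hypothesis applied to the representative $v$ yields $\mathfrak{M},v\models\psi$, and then condition (3) instantiated at $(w,v)$ with the formula $\Diamond\psi\in\Sigma$ produces $\mathfrak{M},w\models\Diamond\psi$.

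I expect the converse direction of the modal case to be the only delicate point, precisely because it is where condition (3) is genuinely needed and where the apparently dual $(3')$ would not suffice. As Remark~\ref{rem:filtration} spells out, $\Diamond\psi\in\Sigma$ does not entail $\neg\psi\in\Sigma$, so one cannot route the argument through $\Box$ and contraposition; the asymmetry of filtration with respect to the primitive operator must be respected. Once the correct clause is invoked, the induction is mechanical and the theorem follows.
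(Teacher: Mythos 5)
Your proof is correct: the paper states the Filtration Theorem without proof (remarking only that subformula closure is what makes the inductive hypothesis applicable), and your induction is exactly the standard argument it alludes to, including the correct use of clause (2) in the forward modal direction and clause (3) with $\Diamond\psi\in\Sigma$ in the converse. Your closing observation that $(3')$ could not replace $(3)$ here, because $\Diamond\psi\in\Sigma$ does not force $\neg\psi\in\Sigma$, matches precisely the point of Remark~\ref{rem:filtration}.
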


It should be emphasized that the subformula closed property of $\Sigma$
is critical in the proof of the filtration theorem; otherwise, we
could not use the inductive hypothesis to complete the proof.

Sometimes using the subformulas of the formula in question directly
will not achieve our goal. We have to supplement some other formulas
for filtration. It should be noticed that after adding more formulas
we still have to keep the set of formulas to be subformula closed.
At this place, both \citep{Chagrov1997} and \citep{Popkorn1994}
made an incautious mistake in the proof of the finite model property
of $\mathbf{K5}$.

The proof in \citep[p. 145]{Chagrov1997} goes as follows.
\begin{quotation}
First, $\mathbf{K5}=\mathbf{K}\oplus\Diamond\Box p\to\Box p$ is characterized
by the class of Euclidean frames. Let $\mathfrak{M}$ be a countermodel
for a formula $\varphi$ based on a Euclidean frame. Again, a filtration
of $\mathfrak{M}$ through $Sub\varphi$ need not be Euclidean. So
let us try a bigger filter, say,
\[
\Sigma=Sub\varphi\cup\{\Diamond\Box\varphi\mid\Box\psi\in Sub\varphi\}.
\]
Let $\mathfrak{N}$ be the largest filtration of $\mathfrak{M}$ through
$\Sigma$. We show that its underlying frame $\mathfrak{G}=(V,S)$
is Euclidean.

Suppose $|x|S|y|$ and $|x|S|z|$, for some $|x|,|y|,|z|\in V$, and
prove that $|y|S|z|$. By the definition of $S$, we need to show
that $\mathfrak{N},|y|\models\Box\psi$ implies $\mathfrak{N},|z|\models\psi$,
for every $\Box\psi\in\Sigma$. So let $\Box\psi\in\Sigma$ and $\mathfrak{N},|y|\models\Box\psi$.
Then $\mathfrak{N},|x|\models\Diamond\Box\psi$ and, by the filtration
theorem, $\mathfrak{M},x\models\Diamond\Box\psi$, from which $\mathfrak{M},x\models\Box\psi$,
since $\mathfrak{M}$ is a model for $\mathbf{K5}$. Therefore, $\mathfrak{N},|x|\models\Box\psi$
and $\mathfrak{N},|z|\models\psi$.
\end{quotation}
Notice that since \citep{Chagrov1997} takes $\Box$ as primitive,
the largest filtration $R^{f}$ is defined by $(2')$ in Definition~\ref{def:minimal-maximal-filtration}
instead of $(2)$. In the proof, the conclusion $\mathfrak{N},|x|\models\Diamond\Box\psi$
is obtained from $\Diamond\Box\psi\in\Sigma$ and the claim in Remark~\ref{rem:filtration}:
when $\Box$ is primitive, $(3')$ implies $(3)$. The whole proof
seems innocuous and is much shorter than the standard proof using
finite bases\footnote{See Definition~\ref{def:finite-base}.}. Unfortunately,
the proof is incorrect!

What is incorrect is that $\Sigma$ is not subformula closed as it
appears. The author thought wrongly that $Sub\Diamond\Box\psi=\{\Diamond\Box\psi,\Box\psi\}\cup Sub\psi$.
If this is case, then the proof does go through. The problem is that
$\Diamond$ is not a primitive operator, but the abbreviation of $\neg\Box\neg$.
Hence,
\[
Sub\Diamond\Box\psi=Sub\neg\Box\neg\Box\psi=\{\neg\Box\neg\Box\psi,\Box\neg\Box\psi,\neg\Box\psi,\Box\psi\}\cup Sub\psi.
\]
But $\Box\neg\Box\psi$ and $\neg\Box\psi$ are not in $\Sigma$.
Can the problem be solved by supplementing these two formulas? No,
because the proof requires that if $\Box\neg\Box\psi\in\Sigma$ then
$\Diamond\Box\neg\Box\psi$ must also be in $\Sigma$. But then it
will introduce more formulas of the form $\Box\psi$, and the proof
requires the formulas with $\Diamond$ prefixed to these formulas
must also be in $\Sigma$. Repeating this process, $\Sigma$ will
become an infinite set. To complete the proof, we have to prove that
$\Sigma$ has finite base with respect to $\mathfrak{M}$. Can we
solve the problem by just taking both $\Box$ and $\Diamond$ to be
primitive? No. Though $\Sigma$ is now subformula closed, the definition
for $R^{l}$ has to be revised as $(2'')$ in Definition~\ref{def:minimal-maximal-filtration}
. Then to prove that $\mathfrak{N}$ is Euclidean we have to consider
both $(2)$ and $(2'')$, which requires $\Sigma$ to satisfy not
only that $\Box\psi\in\Sigma$ implies $\Diamond\Box\psi\in\Sigma$,
but also that $\Box\psi\in\Sigma$ implies $\Diamond\Box\psi\in\Sigma$.
Thereby $\Sigma$ becomes infinite again.

\citep[p. 176]{Popkorn1994} uses an infinite set for filtration to
prove the finite model property of $\mathbf{K5}$. The author first
defines $\Box\Gamma=\{\Box\varphi\mid\varphi\in\Gamma\}$ and $\Diamond\Gamma=\{\Diamond\varphi\mid\varphi\in\Gamma\}$,
where $\Gamma=Sub\varphi$. Then he defines $\Gamma^{\bullet}=\Gamma\cup\Box\Gamma\cup\Diamond\Gamma$
and $\Gamma_{n}$ below.
\[
\begin{aligned}\Gamma_{0} & =\Gamma\\
\Gamma_{n+1} & =(\Gamma_{n})^{\bullet}
\end{aligned}
\]
Finally he defines $\Gamma^{*}=\bigcup_{n\geq0}\Gamma_{n}$, which
is just the smallest set that contains $Sub\varphi$ and is closed
under prefixing $\Box$ and $\Diamond$ . Then $\Gamma^{*}$ is used
for filtration. Though $\Gamma^{*}$ is an infinite set, it has finite
base with respect to Euclidean models. The proof is almost right,
except that $\Gamma^{*}$ is not subformula closed. If both $\Box$
and $\Diamond$ are primitive, then $\Gamma^{*}$ is indeed subformulas
closed. But the textbook takes only $\Box$ to be primitive. Then
$\Diamond\varphi$ is just the abbreviation of $\neg\Box\neg\varphi$,
whose subformulas include not only $\varphi$ but also $\neg\varphi$.
By the definition of $\Gamma^{*}$, we can not ensure that $\neg\varphi$
is also in $\Gamma^{*}$ if $\varphi\in\Gamma^{*}$. This is a very
elusive mistake. Compared to the mistake in \citep[p. 145]{Chagrov1997}
above, this mistake is more easily to be corrected: just take $Sub\Gamma^{*}$
instead of $\Gamma^{*}$ for filtration. Another option is to take
both $\Box$ and $\Diamond$ to be primitive in the language. Then
$\Gamma^{*}$ is subformula closed. Though in verifying that the filtration
model is Euclidean, we have to consider both $(2)$ and $(2')$ in
Definition~\ref{def:minimal-maximal-filtration}, the proof still
goes through, since $\Gamma^{*}$ is closed under prefixing both $\Box$
and $\Diamond$. 

A better proof is to take the hint in Exercise 2.3.8 on Page 83 of
\citep{Blackburn2001a}: using the smallest subformula closed set
that includes $\varphi$ and is closed under prefixing $\Box$ for
filtration (assuming $\mathcal{L}_{\Diamond}$ is our formal language).
We give the complete proof as follows for reference of teaching.
\begin{defn}
\label{def:finite-base}Given a set of formulas $\Sigma\subseteq\mathcal{L}_{\Diamond}$
and a model $\mathfrak{M}=(W,R,V)$, $\Sigma$ has finite base with
respect to $\mathfrak{M}$, if there exists a finite set $\Delta\subseteq\mathcal{L}_{\Diamond}$
such that for every $\varphi\in\Sigma$ there exists $\psi\in\Delta$
such that for all $w\in W$, $\mathfrak{M},w\models\varphi$ iff $\mathfrak{M},w\models\psi$.
\end{defn}

The following two propositions are easily verified.
\begin{prop}
\label{prop:finite-base}If $\Sigma$ has finite base with respect
to $\mathfrak{M}$, then all filtrations of $\mathfrak{M}$ through
$\Sigma$ is a finite model.
\end{prop}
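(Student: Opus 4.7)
The plan is to prove finiteness of any filtration $\mathfrak{M}^{f}$ by bounding the cardinality of its carrier. By clause~(1) of Definition~\ref{def:filtration}, $W^{f}=\{|w|_{\Sigma}\mid w\in W\}$, so finiteness of $\mathfrak{M}^{f}$ reduces to showing that the quotient $W/{\sim_{\Sigma}}$ has finite cardinality. Note that this bound will apply uniformly to every filtration through $\Sigma$, since the choices of $R^{f}$ and $V^{f}$ do not affect $W^{f}$.

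The main step is to transfer the finiteness from the finite base $\Delta$ to $\sim_{\Sigma}$. I would introduce an auxiliary equivalence relation $\sim_{\Delta}$ on $W$, defined by $w\sim_{\Delta}w'$ iff for every $\psi\in\Delta$, $\mathfrak{M},w\models\psi\Leftrightarrow\mathfrak{M},w'\models\psi$, and then verify that $\sim_{\Delta}$ refines $\sim_{\Sigma}$. Given $w\sim_{\Delta}w'$ and any $\varphi\in\Sigma$, the finite base property supplies some $\psi\in\Delta$ that is pointwise equivalent to $\varphi$ on $\mathfrak{M}$; chaining the three biconditionals $\mathfrak{M},w\models\varphi\Leftrightarrow\mathfrak{M},w\models\psi\Leftrightarrow\mathfrak{M},w'\models\psi\Leftrightarrow\mathfrak{M},w'\models\varphi$ gives $w\sim_{\Sigma}w'$.

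Finally, each $\sim_{\Delta}$-class is uniquely determined by the subset of $\Delta$ satisfied at its members, hence $|W/{\sim_{\Delta}}|\leq 2^{|\Delta|}$, which is finite because $\Delta$ is finite. Since $\sim_{\Delta}$ refines $\sim_{\Sigma}$, every $\sim_{\Sigma}$-class is a union of $\sim_{\Delta}$-classes, so $|W^{f}|=|W/{\sim_{\Sigma}}|\leq |W/{\sim_{\Delta}}|\leq 2^{|\Delta|}$. There is no substantive obstacle in the argument; the only subtlety is being explicit that the bound is independent of the choice of $R^{f}$, so no additional hypothesis on the filtration is required.
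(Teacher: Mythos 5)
Your proof is correct, and since the paper states Proposition~\ref{prop:finite-base} without proof (it is one of the two propositions declared ``easily verified''), your argument --- introducing the equivalence $\sim_{\Delta}$ induced by the finite base, showing it refines $\sim_{\Sigma}$ via the three chained biconditionals, and bounding $|W^{f}|=|W/{\sim_{\Sigma}}|\leq|W/{\sim_{\Delta}}|\leq 2^{|\Delta|}$ --- is exactly the standard argument the paper intends. You are also right to flag that $W^{f}$ is fixed by clause~(1) of Definition~\ref{def:filtration} independently of the choice of $R^{f}$ and $V^{f}$, so the bound applies uniformly to all filtrations through $\Sigma$.
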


\begin{prop}
\label{prop:Euclidean-reduction}The following formulas are valid
in Euclidean frames. 
\begin{enumerate}
\item $\Diamond\Diamond\Diamond\varphi\leftrightarrow\Diamond\Diamond\varphi$,
$\Box\Box\Box\varphi\leftrightarrow\Box\Box\varphi$
\item $\Diamond\Diamond\Box\varphi\leftrightarrow\Diamond\Box\varphi$,
$\Box\Box\Diamond\varphi\leftrightarrow\Box\Diamond\varphi$
\item $\Diamond\Box\Diamond\varphi\leftrightarrow\Diamond\Diamond\varphi$,
$\Box\Diamond\Box\varphi\leftrightarrow\Box\Box\varphi$
\item $\Diamond\Box\Box\varphi\leftrightarrow\Diamond\Box\varphi$, $\Box\Diamond\Diamond\varphi\leftrightarrow\Box\Diamond\varphi$
\end{enumerate}
\end{prop}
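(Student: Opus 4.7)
The plan is to verify each equivalence semantically on an arbitrary Euclidean frame $(W,R)$. Within each of the four items the two formulas are duals of each other, obtained by the substitution $\varphi\mapsto\neg\varphi$ and rewriting via $\mathrm{Dual}$ (see Section~1), so it suffices to prove the first (i.e., the $\Diamond$-version) in each item and let the $\Box$-version come for free.

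Writing $S_x=\{y\in W:Rxy\}$, I would first record three consequences of the Euclidean clause $Rxy\wedge Rxz\to Ryz$: (L1)~$Rxy$ implies $Ryy$, by taking $z=y$; (L2)~$Rxy$ implies $S_x\subseteq S_y$, by applying Euclidean to $Rxy$ and $Rxz$ for each $z\in S_x$; (L3)~if $y_1,y_2\in S_x$ then $S_{y_1}=S_{y_2}$, since Euclidean yields both $Ry_1y_2$ and $Ry_2y_1$ and L2 applies in each direction. The payoff is that for any $y\in S_x$ the set $T:=S_y$ is independent of the choice of $y\in S_x$, contains $S_x$, and satisfies $S_w=T$ for every $w\in T$; in particular $R$ restricted to $T$ is the total relation, every $w\in T$ is reflexive, and every two points of $T$ are mutually $R$-related. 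This is really all the geometry that will be used.

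Given the cluster $T$, each equivalence collapses to a one-line argument. For~(1), a chain $x\to y\to z\to w$ with $w\models\varphi$ has $w\in T=S_y$, hence $Ryw$ and $x\models\Diamond\Diamond\varphi$; conversely a chain $x\to y\to z$ extends to $x\to y\to z\to z$ via $Rzz$. For~(2), $x\models\Diamond\Diamond\Box\varphi$ gives $y\in S_x,z\in S_y$ with $z\models\Box\varphi$, and $S_y\subseteq S_z$ (by L2) lifts $\Box\varphi$ from $z$ up to $y$; the converse pads using $y\in S_y$. For~(3), the forward direction uses $y\in S_y$ to turn $\Box\Diamond\varphi$ at $y$ into $\Diamond\varphi$ at $y$; the converse takes the witness chain $x\to y\to z$ and notes that for every $z'\in S_y$ we have $S_{z'}=S_y\ni z$ by L3, so $z'\models\Diamond\varphi$ and hence $y\models\Box\Diamond\varphi$. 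Item~(4) is parallel to~(3): forward uses $y\in S_y$, and the converse uses $S_{z'}=S_y$ for $z'\in S_y$ (L3) together with $y\models\Box\varphi$ to conclude $z'\models\Box\varphi$.

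The main obstacle is not mathematical depth but organization: without the triple (L1)--(L3) and the cluster $T$, each of the eight equivalences devolves into a slightly different but essentially identical semantic verification, and the proof becomes a repetitive case analysis. Once the lemmas are stated up front the proposition reduces to the short verifications just outlined.
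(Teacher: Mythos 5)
Your proof is correct, but there is nothing in the paper to compare it against: the proposition is introduced there with ``The following two propositions are easily verified'' and no proof is given, so your verification genuinely fills a gap rather than paralleling an existing argument. Two points are worth recording about why it is sound. First, the reduction of each $\Box$-half to its $\Diamond$-half is legitimate: validity over a class of frames is closed under uniform substitution, and in the paper's language $\mathcal{L}_{\Diamond}$ the formula $\Box\psi$ is literally an abbreviation of $\neg\Diamond\neg\psi$, so each second formula in items (1)--(4) is the contraposed instance of the first with $\neg\varphi$ in place of $\varphi$. Second, your lemmas (L1)--(L3) are exactly the structure theory of Euclidean frames --- for any $y\in S_x$ the set $T=S_y$ is a final cluster on which $R$ is total, independent of the choice of $y$, with $S_w=T$ for all $w\in T$ --- and each of the eight one-line verifications then checks out; all witnesses needed for $T$ to be nonempty are supplied by the outermost $\Diamond$ in each forward direction, and the padding steps in the converses use only (L1). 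One micro-point: in the converses of (3) and (4), invoking (L3) for $z'\in S_y$ requires $y\in S_y$, which is (L1) applied to $Rxy$; you have in effect already absorbed this into the ``payoff'' statement $S_w=T$ for every $w\in T$, so the argument is complete as written. Your organization also buys something the bare case-by-case check would not: it makes visible why the proposition holds (all modal alternatives beyond depth one live in a single total cluster), which is precisely what makes the set $\Sigma$ in the paper's Proposition~\ref{prop:K5} have a finite base with respect to Euclidean models.
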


\begin{prop}
\label{prop:K5}$\mathbf{K5}$ has finite model property.
\end{prop}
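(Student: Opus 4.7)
The plan is to refute $\varphi$ (with $\not\vdash_{\mathbf{K5}} \varphi$) in a finite Euclidean model obtained by filtration, following the hint of Exercise 2.3.8 of \citep{Blackburn2001a}. Let $\Sigma$ be the smallest subformula closed subset of $\mathcal{L}_{\Diamond}$ that contains $\varphi$ and is closed under prefixing $\Box$ (recall $\Box\psi$ abbreviates $\neg\Diamond\neg\psi$, so adding $\Box\psi$ drags in $\Diamond\neg\psi$ and $\neg\psi$ as well). Using completeness of $\mathbf{K5}$ with respect to Euclidean frames, pick a Euclidean model $\mathfrak{M} = (W,R,V)$ and $w_0 \in W$ with $\mathfrak{M}, w_0 \not\models \varphi$. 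Let $\mathfrak{N}$ be the largest filtration of $\mathfrak{M}$ through $\Sigma$, using clause $(2)$ of Definition~\ref{def:minimal-maximal-filtration}. The proof then splits into showing (a) $\mathfrak{N}$ is finite and (b) $\mathfrak{N}$ is Euclidean; the Filtration Theorem will then deliver $\mathfrak{N}, |w_0| \not\models \varphi$ as the required finite $\mathbf{K5}$-countermodel.

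For (a) I would verify, via Proposition~\ref{prop:finite-base}, that $\Sigma$ has finite base with respect to $\mathfrak{M}$. Every element of $\Sigma$ is obtained from a subformula of $\varphi$ by iterated $\Box$-prefixings and subformula extractions, so after normalising with $\neg\neg\chi \equiv \chi$ and $\neg\Diamond\chi \equiv \Box\neg\chi$ each such formula is equivalent to $M_1 \cdots M_k\, \theta$ or $\neg M_1 \cdots M_k\, \theta$ for some $M_i \in \{\Box, \Diamond\}$ and $\theta \in Sub\varphi$. Proposition~\ref{prop:Euclidean-reduction} then collapses every modal prefix of length $\geq 3$ to one of length $\leq 2$, leaving only finitely many inequivalent truth conditions on $\mathfrak{M}$; a finite base $\Delta$ can be read off these representatives.

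For (b) suppose $|x|R^l|y|$ and $|x|R^l|z|$, and let $\Diamond\psi \in \Sigma$ with $\mathfrak{M}, z \models \psi$; we must deduce $\mathfrak{M}, y \models \Diamond\psi$. The definition of $R^l$ gives $\mathfrak{M}, x \models \Diamond\psi$, and since $\Diamond\psi \to \Box\Diamond\psi$ is valid on Euclidean frames we obtain $\mathfrak{M}, x \models \Box\Diamond\psi$. Closure of $\Sigma$ under prefixing $\Box$ places $\Box\Diamond\psi$ in $\Sigma$, whence subformula closure places $\Diamond\neg\Diamond\psi$ in $\Sigma$ as well. The contrapositive of the defining clause of $|x|R^l|y|$, instantiated with $\chi := \neg\Diamond\psi$, says: if $\mathfrak{M}, x \not\models \Diamond\neg\Diamond\psi$ then $\mathfrak{M}, y \not\models \neg\Diamond\psi$. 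The hypothesis $\mathfrak{M}, x \models \Box\Diamond\psi$ (i.e.\ $\mathfrak{M}, x \not\models \Diamond\neg\Diamond\psi$) activates this, so $\mathfrak{M}, y \models \Diamond\psi$, establishing $|y|R^l|z|$.

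The main obstacle is (a): one must carefully track how the abbreviation $\Box = \neg\Diamond\neg$ interacts with subformula closure in $\mathcal{L}_{\Diamond}$, and verify that the formulas thrown into $\Sigma$ really do fall, up to equivalence in $\mathfrak{M}$, into the finitely many classes permitted by Proposition~\ref{prop:Euclidean-reduction}. This is precisely the bookkeeping glossed over in the textbook proofs criticised earlier in this section, and it is what makes the finite-base argument nontrivial despite $\Sigma$ itself being infinite. Step (b), while cleaner, depends essentially on $\Sigma$ being closed under prefixing $\Box$ (to guarantee $\Box\Diamond\psi \in \Sigma$ and hence, by subformula closure, $\Diamond\neg\Diamond\psi \in \Sigma$); without this closure the contrapositive trick would not be available.
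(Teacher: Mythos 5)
Your proof is correct and takes essentially the same route as the paper's: the same $\Sigma$ (the smallest subformula closed set containing $\varphi$ and closed under prefixing $\Box$), the finite base obtained from Proposition~\ref{prop:Euclidean-reduction}, and the largest filtration shown to be Euclidean by exactly the paper's argument. Your contrapositive step with $\chi:=\neg\Diamond\psi$ is just an explicit inline unpacking of Remark~\ref{rem:filtration}, which the paper cites at that very point, and your normal-form bookkeeping for part (a) merely fills in detail the paper leaves implicit.
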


\begin{proof}
It suffices to prove that a filtration of any Euclidean model is also
Euclidean. Given $\varphi\notin\mathbf{K5}$, let $\Sigma$ be the
the smallest subformula closed set that includes $\varphi$ and is
closed under prefixing $\Box$. By Proposition~\ref{prop:Euclidean-reduction},
$\Sigma$ has finite base with respect to Euclidean models. Suppose
$\mathfrak{M}=(W,R,V)$ is Euclidean. Consider the largest filtration
$\mathfrak{M}^{f}=(W^{f},R^{l},V^{f})$ of $\mathfrak{M}$ through
$\Sigma$. We prove that $\mathfrak{M}^{f}$ is also Euclidean. Suppose
$|w|R^{l}|u|$ and $|w|R^{l}|v|$. We prove $|u|R^{l}|v|$. Suppose
$\Diamond\psi\in\Sigma$ and $\mathfrak{M},v\models\psi$. It suffices
to prove $\mathfrak{M},u\models\Diamond\psi$. By $\mathfrak{M},v\models\psi$
and $|w|R^{l}|v|$, we have $\mathfrak{M},w\models\Diamond\psi$.
Since $\mathfrak{M}$ is Euclidean, it follows that $\mathfrak{M},w\models\Box\Diamond\psi$.
By the construction of $\Sigma$, we have $\Box\Diamond\psi\in\Sigma$.
Then by Remark~\ref{rem:filtration} and $|w|R^{l}|u|$, we obtain
$\mathfrak{M},u\models\Diamond\psi$.
\end{proof}

\section{Minimal Canonical Model and Finite Model Property}

Given any normal modal logic $\Lambda$, if $\varphi\notin\Lambda$
then it is easily seen that the canonical model of $\Lambda$ is a
counter-model of $\varphi$. But standard canonical models are not
finite models. By constructing canonical models using maximal consistent
sets relative to a finite set of formulas, we can prove the finite
model property of some logics. Such canonical models are called minimal
canonical models. The accessibility relations of minimal canonical
models for different logics are usually defined differently. In the
sequel, we will show an easily made mistake in defining minimal canonical
models, by taking the logic $\mathbf{KL}$ as an example.

Given any $\varphi$, let $Sub^{-}\varphi=\{\neg\psi\mid\psi\in Sub\varphi\}$.
Let $Sub^{+}\varphi=Sub\varphi\cup Sub^{-}\varphi$. Given a logic
$\Lambda$ and a formula $\varphi$, $\Gamma$ is $\Lambda$-maximal-consistent
relative to $\varphi$, if $\Gamma\subseteq Sub^{+}\varphi$ is $\Lambda$-consistent,
and for all $\psi\in Sub\varphi$, $\psi\in\Gamma$ or $\neg\psi\in\Gamma$.
\begin{defn}
\label{def:minimal-canonical}Given a formula $\varphi$, define the
minimal canonical model $\mathfrak{M}_{\varphi}^{\mathbf{KL}}=(W_{\varphi}^{\mathbf{KL}},R_{\varphi}^{\mathbf{KL}},V_{\varphi}^{\mathbf{KL}})$
relative to $\varphi$ for $\mathbf{KL}$ as follows.
\begin{enumerate}
\item $W_{\varphi}^{\mathbf{KL}}=\{\Gamma\subseteq Sub^{+}\varphi\mid\Gamma\text{ is }\Lambda\text{-maximal-consistent relative to }\varphi\}$.
\item For all $w,u\in W_{\varphi}^{\mathbf{KL}}$, $wR_{\varphi}^{\mathbf{KL}}u$
iff for all $\psi\in\mathcal{L}_{\Diamond}$, if $\Diamond\psi\in u$
or $\psi\in u$ then $\Diamond\psi\in w$, and there exists $\Diamond\chi\in w$
but $\Diamond\chi\notin u$.
\item For all $w\in W_{\varphi}^{\mathbf{KL}}$, if $p\in PV\cap Sub^{+}\varphi$,
then $w\in V_{\varphi}^{\mathbf{KL}}(p)$ iff $p\in w$; if $p\in PV-Sub^{+}\varphi$
then $V_{\varphi}^{\mathbf{KL}}(p)$ is an arbitrary subset of $W_{\varphi}^{\mathbf{KL}}$
(for instance, the empty set).
\end{enumerate}
\end{defn}

\begin{rem}
Unlike canonical models, in defining accessibility relations of minimal
canonical models, we can not freely choose $\Box$ or $\Diamond$,
but have to choose it according to which is primitive in the formal
language. Definition~\ref{def:minimal-canonical} only applies to
the case that $\Diamond$ is primitive. If the primitive modal operator
is $\Box$, to prove the truth lemma of minimal canonical models,
$(2)$ above should be replaced by $(2')$.
\begin{enumerate}
\item [$(2')$] For all $w,u\in W_{\varphi}^{\mathbf{KL}}$, $wR_{\varphi}^{\mathbf{KL}}u$
iff for all $\psi$, if $\Box\psi\in w$ then $\psi,\Box\psi\in u$
and there exists $\Box\chi\in u$ but $\Box\chi\notin w$.
\end{enumerate}
Neither $(2)$ nor $(2')$ implies the other (and hence they are not
equivalent). The reason is that, $w$ and $u$ here are not maximal
consistent sets, but maximal consistent sets relative to $\varphi$.
They have maximality only for the subformulas of $\varphi$, not for
all formulas. Thus they do not have some properties of maximal consistent
sets. For instance, from $\psi\in w$ and the logical equivalence
of $\psi$ and $\psi'$ it does not follow that $\psi'\in w$, because
$\psi'$ may not be in $Sub^{+}\varphi$. In particular, from $\Diamond\psi\notin w$
we can only obtain $\neg\Diamond\psi\in w$ but not $\Box\neg\psi\in w$.
\end{rem}

Just at this place, \citep{Blackburn2001a} made an incautious mistake.
Exercise 4.8.7 on Page 246 of this book is to prove the finite model
property of $\mathbf{KL}$. It gives a hint for proof: define the
canonical relation $R_{\varphi}^{\mathbf{KL}}$ by: $wR_{\varphi}^{\mathbf{KL}}u$
iff for all $\psi$, if $\Box\psi\in w$ then $\psi,\Box\psi\in u$,
and there exists $\Box\chi\in u$ but $\Box\chi\notin w$. This definition
is just $(2')$ above instead of (2). But \citep{Blackburn2001a}
takes $\Diamond$ to be primitive rather than $\Box$. This implies
that the correct hint should be $(2)$ rather than $(2')$. We guess
that the hint was copied from other sources which take $\Box$ to
be primitive and was forgotten to be revised accordingly. This might
be a careless mistake. But it could also be that the authors thought
$(2)$ and $(2')$ are just equivalent. If the latter is the case,
then our correction could be of more significance.

Of course, we could take both $\Box$ and $\Diamond$ to be primitive
in the formal language. Then subformulas are clearer, which will cause
less mistakes. But then in defining filtration and (minimal) canonical
models, we have to consider both $\Box$ and $\Diamond$. Moreover,
in proving some critical results (such as the truth lemma of canonical
models), we have to consider both the cases $\varphi=\Box\psi$ and
$\varphi=\Diamond\psi$. Though it does not increase difficulty, the
proof will become more tedious. From the perspective of writing textbooks,
taking only one modal operator as primitive is definitely more convenient
(but requires more care).

\section{Deductive Consequence and the Deduction Theorem}

Another easily made mistake is regarding the rule of necessitation
$\mathrm{RN}$ as having the same effect as the rule of modus ponens.
Though both preserve validity, they are different in preserving truth.
Modus ponens preserves local truth, i.e., for every world $w$, $\psi$
is truth at $w$ as long as $\varphi$ and $\varphi\to\psi$ are true
at $w$. Necessitation does not preserve local truth: from the truth
of $\varphi$ at $w$ it does not follow that $\Box\varphi$ is true
at $w$. Necessitation only preserves global truth, i.e., for every
model $\mathfrak{M}$, if $\varphi$ is true at all wolds in $\mathfrak{M}$,
then $\varphi$ is true at all worlds in $\mathfrak{M}$.\footnote{See \citep{Avron1991b,Fagin1992} and \citep{Sundholm1983} for the
distinction of these two kinds of rules.}

As the rule of necessitation does not preserve local truth, it can
not be applied to inferences with premises. This implies that when
defining the deductive consequence $\vdash_{\Lambda}$ for a modal
logic $\Lambda$, we can not follow the definition used for classical
propositional logic, which is defined by: $\Gamma\vdash_{\Lambda}\varphi$
iff there exists a finite sequence of formulas $\varphi_{1},\ldots,\varphi_{n}$
such that $\varphi_{n}=\varphi$ and for each $\varphi_{i}$, either
$\varphi_{i}\in\Gamma$, or $\varphi_{i}$ is an axiom of $\Lambda$,
or $\varphi_{i}$ is obtained from proceeding formulas in the sequence
using inference rules of $\Lambda$. If this definition is adopted
for modal logic, then as long as $\Lambda$ contains the rule of necessitation,
we have $p\vdash_{\Lambda}\Box p$. But the inference from $p$ to
$\Box p$ is usually invalid: from it is raining today, we can not
infer that it is necessary that it is raining today.

The understanding of the rule of necessitation is also related to
the deduction theorem in modal logic. Indeed, whether the deduction
theorem holds for modal logic had caused debate in the literature.\citep{Hakli2012}
To get the rule of necessitation right, there are six ways of defining
deductive consequence in modal logic in the literature.\footnote{In the sequel, we only consider deductive consequence defined in axiomatic
systems, and will not consider other proof systems (like sequent calculus
and tableau systems).}

The first way is called \emph{omitted definition} (as in \citep{Hughes1996}).
This definition does not consider inferences with premises, and considers
only weak soundness and weak completeness. Then the rule of necessitation
can only be applied to axioms and theorems derived from axioms, and
will thus avoid the incorrect application of it. The practice which
considers only theorems and not consequences was reasonable early
when logicians cared only about the relation between logic and mathematics.
But with the development of more and more non-classical logics, theorems
of an axiomatic system can not completely characterize a logic any
more.\footnote{For example, Kleene's three valued logic does not have any theorems,
but it has valid inferences.} The practice which considers only theorems and not consequences is
gradually abandoned. A bit surprisingly, a relatively new and popular
textbook \citep{VanBenthem2010} also adopts omitted definition.

The second way is called \emph{classical definition} (as in \citep{Chagrov1997}
and \citep{Fagin1995a}). This definition follows the standard definition
of deductive consequence for classical propositional logic and transfers
the treatment of the rule of necessitation to the deduction theorem,
which either has different contents (as in \citep{Chagrov1997}) or
is augmented with additional constraints. The latter is also the treatment
for the rule of universal generalization in some early textbooks (\citep{Hamilton1978})
in mathematical logic. There are two drawbacks of classical definition.
First, the deduction theorem is made too complicated, which is not
friendly to students. Second, there is no strong soundness under this
definition, which means the deductive consequence does not characterize
valid inferences.

The third way is called \emph{ternary definition} (as in \citep{Fitting1998}).
In this definition, premises are divided into two parts: a global
part and a local part. The notation $\Gamma\vdash_{\Lambda}\Delta\Rightarrow\varphi$
means that $\varphi$ can be deduced from global premises $\Gamma$
and local premises $\Delta$.\footnote{Fitting's original notation is $\Gamma\vdash_{\Lambda}\Delta\longrightarrow\varphi$.
To distinguish $\longrightarrow$ from material implication in the
object language, we use $\Rightarrow$ instead.} The rule of necessitation can be applied to global premises but not
to local premises. Thus, the deduction becomes a ternary relation.
Accordingly, the deduction theorem splits into two for the two kinds
of premises. As this definition is too distinctive, it has not been
widely adopted.

The fourth way is called \emph{reduced definition} (as in \citep{Lemmon1977}
, \citep{Chellas1980}, \citep{Blackburn2001a}, \citep{Goldblatt1993},
and open textbook \citep{Zach2018}). In this definition, $\Gamma\vdash_{\Lambda}\varphi$
iff there exists a finite subset $\Phi$ of $\Gamma$ such that $\bigwedge\Phi\to\varphi$\footnote{If $\Phi=\{\varphi_{1},\ldots,\varphi_{n}\}$ then $\bigwedge\Phi=\varphi_{1}\land\ldots\land\varphi_{n}$;
if $\Phi=\emptyset$ then $\bigwedge\Phi=\top$.} has a formal proof in $\Lambda$. The essential idea is to reduce
deduction with premises to that without premises. Since $\vdash_{\Lambda}p\to\Box p$
usually does not hold, under this definition $p\vdash_{\Lambda}\Box p$
does not hold either, and hence the deductive consequence accords
to the semantics. Moreover, the deduction theorem holds almost trivially
without any constraints. The rule of uniform substitution can also
be explicit without hidden in axiom schemata. Under this definition,
system $\mathbf{K}$ and its deductive consequence is defined as follows.
\begin{defn}
[reduced definition]The axiomatic system $\mathbf{K}^{r}$ consists
of the following axioms and rules of inference.
\begin{lyxlist}{00.00.0000}
\item [{PC1}] $p\to(q\to p)$
\item [{PC2}] $(p\to(q\to r)\to((p\to q)\to(p\to r))$
\item [{PC3}] $(\neg p\to\neg q)\to(q\to p)$
\item [{K}] $\Box(p\to q)\to(\Box p\to\Box q)$
\item [{dual}] $\Diamond p\leftrightarrow\neg\Box\neg p$
\item [{MP}] $\dfrac{\varphi,\varphi\to\psi}{\psi}$
\item [{RN}] $\dfrac{\varphi}{\Box\varphi}$
\item [{US}] $\dfrac{\varphi}{\varphi^{\sigma}}$, where $\sigma$ is any
substitution.\footnote{For any $\sigma:PV\to\mathcal{L}_{\Diamond}$, $\varphi^{\sigma}$
is inductively defined as follows. $p^{\sigma}=\sigma(p)$, $(\neg\psi)^{\sigma}=\neg\psi^{\sigma}$,
$(\psi\to\chi)^{\sigma}=\psi^{\sigma}\to\chi^{\sigma}$, $(\Diamond\psi)^{\sigma}=\Diamond\psi^{\sigma}$.}
\end{lyxlist}
We say $\varphi$ is derivable from $\Gamma$ in $\mathbf{K}^{r}$,
denoted $\Gamma\vdash_{\mathbf{K}^{r}}\varphi$, if there exists a
finite subset $\Phi$ of $\Gamma$ such that $\bigwedge\Phi\to\varphi$
has a formal proof in $\mathbf{K}^{r}$, i.e., there exists a finite
sequence of formulas $\varphi_{1},\ldots,\varphi_{n}$ such that $\varphi_{n}=\varphi$,
and for each $\varphi_{i}$, either $\varphi_{i}$ is an axiom of
$\mathbf{K}^{r}$, or is obtained from proceeding formulas in the
sequence using the rules of inference of $\mathbf{K}^{r}$. We abbreviate
$\emptyset\vdash_{\mathbf{K}^{r}}\varphi$ as $\vdash_{\mathbf{K}^{r}}\varphi$.
\end{defn}

Reduced definition is the most popular one in modal logic for deductive
consequence. But it has three drawbacks. First, it is not as intuitive
as classical definition, especially for students who learned classical
definition in classical propositional logic before. Without explanation,
they may be confused why in modal logic the definition has to be modified.
Second, technically the definition relies on the logical constant
$\to$\footnote{The other logical constant $\land$ can be eliminated by defining
$\Gamma\vdash_{\Lambda}\varphi$ by: there exists $\{\varphi_{1},\ldots,\varphi_{n}\}\subseteq\Gamma$
such that $\varphi_{1}\to(\varphi_{2}\to\cdots\to(\varphi_{n}\to\varphi)\cdots)$
has a formal proof in $\Lambda$.}. But not all logical languages contain this logical constant. Thereby,
reduced definition is not general enough. Third, though the deduction
theorem holds under this definition, it has no practical use at all.
To prove $\varphi\to\psi$, we can not assume $\varphi$ and prove
$\psi$ (so the proof may go easier) but have to prove $\varphi\to\psi$
directly, since $\varphi\vdash_{\Lambda}\psi$ is just defined by
$\vdash_{\Lambda}$$\varphi\to\psi$.

The fifth way is called \emph{bounded definition} (as in \citep{Hakli2012}).
This definition restricts the use of the rule of necessitation, so
that it can only be applied to conclusions derived from empty premises.
Under this definition, the deduction theorem also holds without constraints.
Compared to the above definitions, it has more advantages. But it
can not be easily defined precisely in standard forms of axiomatic
systems.\footnote{The definition in \citep{Hakli2012} is not given in standard forms
of axiomatic systems but in a form more like sequent calculus.}Here we give a precise definition that is closer to standard axiomatic
systems, taking system $\mathbf{K}$ as an example.
\begin{defn}
[bounded definition]\label{def:derivation-constrained}The axiomatic
system $\mathbf{K}^{b}$ consists of the following axiom schemata
and rules of inference.
\begin{lyxlist}{00.00.0000}
\item [{sPC1}] $\varphi\to(\psi\to\varphi)$
\item [{sPC2}] $(\varphi\to(\psi\to\chi)\to((\varphi\to\psi)\to(\varphi\to\chi))$
\item [{sPC3}] $(\neg\varphi\to\neg\psi)\to(\psi\to\varphi)$
\item [{sK}] $\Box(\varphi\to\psi)\to(\Box\varphi\to\Box\psi)$
\item [{sdual}] $\Diamond\varphi\to\neg\Box\neg\varphi$
\item [{MP}] $\dfrac{\varphi,\varphi\to\psi}{\psi}$
\item [{RN}] $\dfrac{\varphi}{\Box\varphi}$
\end{lyxlist}
A finite sequence of formulas $\varphi_{1},\ldots,\varphi_{n}$ is
a formal proof of $\varphi$ in $\mathbf{K}^{b}$ from empty premises,
if $\varphi_{n}=\varphi$, and for each $\varphi_{i}$
\begin{itemize}
\item either $\varphi_{i}$ is an instance of one of the axiom schema of
$\mathbf{K}^{b}$, or
\item there exists $j,k<i$ such that $\varphi_{k}=\varphi_{j}\to\varphi_{i}$,
i.e., $\varphi_{i}$ can be obtained from proceeding formulas in the
sequence using the rule MP, or
\item there exists $j<i$ such that $\varphi_{i}=\Box\varphi_{j}$, i.e.,
$\varphi_{i}$ can be obtained from a proceeding formula in the sequence
using RN.
\end{itemize}
We say $\varphi$ is derivable from $\Gamma$ in $\mathbf{K}^{b}$,
denoted $\Gamma\vdash_{\mathbf{K}^{b}}\varphi$, if there exists a
finite sequence of formulas $\varphi_{1},\ldots,\varphi_{n}$ such
that $\varphi_{n}=\varphi$ and for each $\varphi_{i}$,
\begin{itemize}
\item either $\varphi_{i}\in\Gamma$, or 
\item $\varphi_{i}$ is an instance of one of the axiom schema of $\mathbf{K}^{b}$,
or
\item there exists $j,k<i$ such that $\varphi_{k}=\varphi_{j}\to\varphi_{i}$,
i.e., $\varphi_{i}$ can be obtained from proceeding formulas in the
sequence using the rule MP, or
\item there exists $j<i$ such that $\varphi_{i}=\Box\varphi_{j}$, and
there exists $\varphi_{j_{1}},\ldots,\varphi_{j_{k}}\in\{\varphi_{1},\ldots,\varphi_{j}\}$
such that $\varphi_{j_{1}},\ldots,\varphi_{j_{k}}$ is a formal proof
of $\varphi_{j}$ in $\mathbf{K}^{b}$ from empty premises.
\end{itemize}
We abbreviate $\emptyset\vdash_{\mathbf{K}^{b}}\varphi$ by $\vdash_{\mathbf{K}^{b}}\varphi$. 
\end{defn}

We list some easily proved results for $\vdash_{\mathbf{K}^{b}}$
without proofs.
\begin{lem}
\label{lem:varphi-varphi}$\vdash_{\mathbf{K}^{b}}\varphi\to\varphi$.
\end{lem}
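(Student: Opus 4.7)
The plan is to give the standard Hilbert-style derivation of $\varphi \to \varphi$ that works in any axiomatic system containing the classical schemata \textbf{sPC1}, \textbf{sPC2} together with modus ponens \textbf{MP}. No modal axioms or the necessitation rule are needed, and the derivation uses only empty premises, so it establishes $\vdash_{\mathbf{K}^{b}}\varphi \to \varphi$ directly (without having to invoke the more delicate clauses of Definition~\ref{def:derivation-constrained} that govern $\Gamma \vdash_{\mathbf{K}^{b}}$ with nonempty $\Gamma$).

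Concretely, I would instantiate \textbf{sPC2} with $\psi := \varphi \to \varphi$ and $\chi := \varphi$ to obtain the length-conditional
\[
(\varphi \to ((\varphi \to \varphi) \to \varphi)) \to ((\varphi \to (\varphi \to \varphi)) \to (\varphi \to \varphi)).
\]
I would then instantiate \textbf{sPC1} twice: once with $\psi := \varphi \to \varphi$ to get $\varphi \to ((\varphi \to \varphi) \to \varphi)$, and once with $\psi := \varphi$ to get $\varphi \to (\varphi \to \varphi)$. Applying \textbf{MP} to the \textbf{sPC2} instance and the first \textbf{sPC1} instance yields $(\varphi \to (\varphi \to \varphi)) \to (\varphi \to \varphi)$, and a final application of \textbf{MP} with the second \textbf{sPC1} instance produces $\varphi \to \varphi$.

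There is no real obstacle here; the only thing to be careful about is that the resulting five-line sequence genuinely qualifies as a \emph{formal proof from empty premises} in the sense of Definition~\ref{def:derivation-constrained}, i.e., each line is either an instance of an axiom schema (the three \textbf{sPC1}/\textbf{sPC2} lines) or obtained from two earlier lines by \textbf{MP} (the two conclusions). Since \textbf{RN} is not invoked, the side condition on necessitation is vacuous, and the resulting derivation also witnesses $\vdash_{\mathbf{K}^{b}}\varphi \to \varphi$ according to the stipulation $\emptyset \vdash_{\mathbf{K}^{b}} \varphi \to \varphi$.
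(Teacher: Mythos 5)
Your derivation is correct: the five-line sequence (sPC2 instance, two sPC1 instances, two applications of MP) is exactly the standard Hilbert proof of $\varphi\to\varphi$, each line qualifies under the definition of a formal proof from empty premises in Definition~\ref{def:derivation-constrained}, and since RN is never used nothing delicate about $\mathbf{K}^{b}$ arises. The paper states this lemma among the ``easily proved results\dots without proofs,'' and your argument is precisely the canonical derivation the author intends.
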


\begin{thm}
[Deduction Theorem]If $\Gamma,\psi\vdash_{\mathbf{K}^{b}}\varphi$,
then $\Gamma\vdash_{\mathbf{K}^{b}}\psi\to\varphi$.
\end{thm}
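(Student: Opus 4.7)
The plan is to proceed by strong induction on the length $n$ of a derivation $\varphi_{1},\ldots,\varphi_{n}=\varphi$ witnessing $\Gamma,\psi\vdash_{\mathbf{K}^{b}}\varphi$, showing at each step that $\Gamma\vdash_{\mathbf{K}^{b}}\psi\to\varphi_{i}$. This mirrors the classical proof of the deduction theorem, but the crucial novelty is how the RN step is handled, and this is where the bounded definition pays off.

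First I would dispose of the easy cases. If $\varphi_{i}\in\Gamma$, or $\varphi_{i}$ is an instance of one of the axiom schemata (sPC1, sPC2, sPC3, sK, sdual), then $\Gamma\vdash_{\mathbf{K}^{b}}\varphi_{i}$ directly, and instantiating sPC1 as $\varphi_{i}\to(\psi\to\varphi_{i})$ followed by MP gives $\Gamma\vdash_{\mathbf{K}^{b}}\psi\to\varphi_{i}$. If $\varphi_{i}=\psi$, then $\psi\to\varphi_{i}$ is $\psi\to\psi$, which is a theorem by Lemma~\ref{lem:varphi-varphi}, hence derivable from $\Gamma$. The MP case is the standard one: if $\varphi_{i}$ comes from $\varphi_{j}$ and $\varphi_{k}=\varphi_{j}\to\varphi_{i}$ with $j,k<i$, then by the induction hypothesis $\Gamma\vdash_{\mathbf{K}^{b}}\psi\to\varphi_{j}$ and $\Gamma\vdash_{\mathbf{K}^{b}}\psi\to(\varphi_{j}\to\varphi_{i})$, and two applications of MP to the sPC2 instance $(\psi\to(\varphi_{j}\to\varphi_{i}))\to((\psi\to\varphi_{j})\to(\psi\to\varphi_{i}))$ yield $\Gamma\vdash_{\mathbf{K}^{b}}\psi\to\varphi_{i}$.

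The step that requires care, and which I expect to be the main conceptual obstacle, is the RN case: $\varphi_{i}=\Box\varphi_{j}$ for some $j<i$. Under the classical definition this would fail, because from $\Gamma,\psi\vdash\varphi_{j}$ one cannot in general conclude $\Gamma\vdash\psi\to\Box\varphi_{j}$. However, Definition~\ref{def:derivation-constrained} builds the restriction into the very rule: such a step is only legal when there is a subsequence of $\varphi_{1},\ldots,\varphi_{j}$ that is already a formal proof of $\varphi_{j}$ from empty premises, i.e., $\vdash_{\mathbf{K}^{b}}\varphi_{j}$. Taking such a proof and appending $\Box\varphi_{j}$ via RN (which is legal from empty premises, since $\varphi_{j}$ is itself so derivable) shows $\vdash_{\mathbf{K}^{b}}\Box\varphi_{j}$. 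In particular $\Gamma\vdash_{\mathbf{K}^{b}}\Box\varphi_{j}$, and one more application of sPC1 plus MP gives $\Gamma\vdash_{\mathbf{K}^{b}}\psi\to\Box\varphi_{j}$, as required.

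Putting the cases together completes the induction and establishes the theorem. The only subtle point to double-check is that the bookkeeping in the RN clause is consistent with itself: namely, that the subsequence of $\varphi_{1},\ldots,\varphi_{j}$ witnessing $\vdash_{\mathbf{K}^{b}}\varphi_{j}$ itself only uses RN on formulas provable from empty premises, which is immediate from the same clause in the definition of a formal proof from empty premises. No reliance on the deduction theorem for the empty-premise notion is required, so there is no circularity.
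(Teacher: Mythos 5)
Your proof is correct. The paper lists this Deduction Theorem among the results for $\vdash_{\mathbf{K}^{b}}$ stated explicitly \emph{without} proofs, so there is nothing to compare against; your induction --- with the critical RN case discharged by observing that the bounded clause in Definition~\ref{def:derivation-constrained} guarantees an empty-premise proof of $\varphi_{j}$, so that $\Box\varphi_{j}$ is itself a theorem (hence derivable from $\Gamma$) and an sPC1 instance plus MP closes the case --- is exactly the standard argument the authors evidently intend, and your closing remark about non-circularity (the empty-premise notion is defined prior to, and independently of, derivation from premises) is the right point to check.
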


\begin{thm}
[Compactness]\label{thm:K^b-compactness}$\Gamma\vdash_{\mathbf{K}^{b}}\varphi$
iff there exists a finite subset $\Phi$ of $\Gamma$ such that $\Phi\vdash_{\mathbf{K}^{b}}\varphi$.
\end{thm}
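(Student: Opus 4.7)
The plan is to treat this as a standard compactness-by-inspection argument: since every derivation is by definition a \emph{finite} sequence of formulas, only finitely many premises from $\Gamma$ can be cited in it, and the remaining justifications (axioms, MP, RN from a proof from empty premises) are all insensitive to which premise set we are working over.

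For the right-to-left direction, suppose $\Phi \subseteq \Gamma$ is finite and $\Phi \vdash_{\mathbf{K}^b} \varphi$, witnessed by a sequence $\varphi_1, \ldots, \varphi_n$. I would observe that the very same sequence witnesses $\Gamma \vdash_{\mathbf{K}^b} \varphi$: any step justified by membership in $\Phi$ is also justified by membership in $\Gamma$, while MP, axiom, and RN steps do not refer to the premise set at all (in particular, the proof-from-empty-premises condition attached to the RN clause is independent of $\Gamma$).

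For the left-to-right direction, suppose $\Gamma \vdash_{\mathbf{K}^b} \varphi$, witnessed by $\varphi_1, \ldots, \varphi_n$. Define
\[
\Phi = \{\varphi_i \mid 1 \le i \le n,\ \varphi_i \in \Gamma\}.
\]
Then $\Phi$ is finite (as a subset of an $n$-element list) and $\Phi \subseteq \Gamma$. I would then verify that the same sequence $\varphi_1, \ldots, \varphi_n$ is a derivation witnessing $\Phi \vdash_{\mathbf{K}^b} \varphi$: for each $i$, the clause used to justify $\varphi_i$ in the original derivation remains valid when $\Gamma$ is replaced by $\Phi$. If $\varphi_i \in \Gamma$ was the clause used, then $\varphi_i \in \Phi$ by construction; the axiom and MP clauses are unchanged; and the RN clause only requires the existence of a proof from empty premises for the earlier formula, which is again independent of the premise set.

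There is essentially no obstacle here: the delicate feature of $\vdash_{\mathbf{K}^b}$ is the side condition attached to its RN clause, but this side condition quantifies only over proofs from the empty premise set and hence plays no role in restricting the premises used. The only point worth flagging explicitly in the write-up is that this side condition is preserved when passing from $\Gamma$ to $\Phi$, so that the compactness argument which would be completely routine in classical logic goes through unchanged for $\mathbf{K}^b$ as well.
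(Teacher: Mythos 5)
Your proof is correct, and since the paper states this theorem among results listed explicitly ``without proofs,'' your argument is precisely the routine one it presumably intends: both directions reuse the given derivation verbatim, and the only point needing care --- that the side condition on the RN clause quantifies solely over proofs from the \emph{empty} premise set, and hence is unaffected by replacing $\Gamma$ with $\Phi$ --- is exactly the point you flag. Nothing is missing; in particular your $\Phi$ (all formulas of the sequence lying in $\Gamma$, whatever clause justified them) is a finite subset of $\Gamma$ that covers every step justified by the premise clause, which is all the left-to-right direction requires.
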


\begin{cor}
[Derivation Theorem]\label{cor:K^b-derivation}$\Gamma\vdash_{\mathbf{K}^{b}}\varphi$
iff there exists a finite subset $\Phi$ of $\Gamma$ such that $\vdash_{\mathbf{K}^{b}}\bigwedge\Phi\to\varphi$.
\end{cor}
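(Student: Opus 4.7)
The plan is to obtain the Derivation Theorem as a straightforward combination of Theorem~\ref{thm:K^b-compactness} (Compactness) and the Deduction Theorem, with the only extra ingredient being some purely propositional manipulation of iterated implications and conjunctions. Since $\bigwedge$ is a defined connective (built from $\neg$ and $\to$), it will be useful to note upfront that the propositional equivalence between the ``curried'' implication $\psi_{1}\to(\psi_{2}\to\cdots\to(\psi_{n}\to\varphi)\cdots)$ and the ``uncurried'' implication $\bigwedge\Phi\to\varphi$ is a propositional tautology, hence provable in $\mathbf{K}^{b}$ from \textrm{sPC1}--\textrm{sPC3} and \textrm{MP} alone.

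For the left-to-right direction, I would start from $\Gamma\vdash_{\mathbf{K}^{b}}\varphi$ and apply Theorem~\ref{thm:K^b-compactness} to obtain a finite $\Phi=\{\psi_{1},\ldots,\psi_{n}\}\subseteq\Gamma$ with $\Phi\vdash_{\mathbf{K}^{b}}\varphi$. Peeling premises off one at a time by $n$ applications of the Deduction Theorem yields $\vdash_{\mathbf{K}^{b}}\psi_{1}\to(\psi_{2}\to\cdots\to(\psi_{n}\to\varphi)\cdots)$. Converting this into $\vdash_{\mathbf{K}^{b}}\bigwedge\Phi\to\varphi$ is then a purely propositional move, using the tautology noted above (for the degenerate case $\Phi=\emptyset$, we take $\bigwedge\Phi=\top$, use Lemma~\ref{lem:varphi-varphi} to get $\vdash_{\mathbf{K}^{b}}\varphi\to\varphi$, and combine with \textrm{sPC1} to get $\vdash_{\mathbf{K}^{b}}\top\to\varphi$).

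For the right-to-left direction, suppose $\Phi\subseteq\Gamma$ is finite and $\vdash_{\mathbf{K}^{b}}\bigwedge\Phi\to\varphi$. Since every derivation from $\emptyset$ is a fortiori a derivation from $\Gamma$, we have $\Gamma\vdash_{\mathbf{K}^{b}}\bigwedge\Phi\to\varphi$. Each $\psi_{i}\in\Phi$ lies in $\Gamma$, hence $\Gamma\vdash_{\mathbf{K}^{b}}\psi_{i}$ trivially. Using the propositional tautology $\psi_{1}\to(\psi_{2}\to\cdots\to(\psi_{n}\to\bigwedge\Phi)\cdots)$ (provable from \textrm{sPC1}--\textrm{sPC3}), repeated \textrm{MP} gives $\Gamma\vdash_{\mathbf{K}^{b}}\bigwedge\Phi$; one more \textrm{MP} then yields $\Gamma\vdash_{\mathbf{K}^{b}}\varphi$.

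The main obstacle, such as it is, is purely bookkeeping: verifying that the propositional ``currying/uncurrying'' equivalence between iterated implications and conjunctive implications is indeed provable in $\mathbf{K}^{b}$. This is routine because $\mathbf{K}^{b}$ contains a complete axiomatization of classical propositional logic via \textrm{sPC1}--\textrm{sPC3} and \textrm{MP}, and the equivalence in question is a classical tautology independent of the modal operators; no use of \textrm{sK}, \textrm{sdual}, or \textrm{RN} is needed anywhere in the argument.
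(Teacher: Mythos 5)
Your proposal is correct and follows exactly the route the paper intends: the paper states this corollary without proof as an immediate consequence of Theorem~\ref{thm:K^b-compactness} and the Deduction Theorem, plus the routine fact that the currying/uncurrying equivalences are classical tautologies provable from $\mathrm{sPC1}$--$\mathrm{sPC3}$ and $\mathrm{MP}$ alone (with $\bigwedge$ a defined connective), which is precisely your argument. One trivial slip: in the degenerate case $\Phi=\emptyset$, the step to $\vdash_{\mathbf{K}^{b}}\top\to\varphi$ goes via $\mathrm{MP}$ from $\vdash_{\mathbf{K}^{b}}\varphi$ (given by compactness) and the $\mathrm{sPC1}$ instance $\varphi\to(\top\to\varphi)$, so Lemma~\ref{lem:varphi-varphi} is not the ingredient needed there (it is, however, what yields $\vdash_{\mathbf{K}^{b}}\top$ for the converse direction's empty case, which you left implicit).
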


The derivation theorem indicates that the deduction relation $\vdash_{\mathbf{K}^{b}}$
can be characterized by the theorems of $\mathbf{K}^{b}$. The next
theorem is used for proving Theorem~\ref{thm:equivalence}.
\begin{thm}
[Substitution Theorem]\label{thm:K^b-substitution}If$\vdash_{\mathbf{K}^{b}}\varphi$
then $\vdash_{\mathbf{K}^{b}}\varphi^{\sigma}$, for any substitution
$\sigma$.
\end{thm}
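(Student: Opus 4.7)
The plan is to prove the theorem by straightforward induction on the length of a formal proof of $\varphi$ from empty premises in $\mathbf{K}^{b}$, showing that applying $\sigma$ pointwise to the proof sequence yields another valid formal proof (from empty premises).

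First, I would fix a formal proof $\varphi_{1},\ldots,\varphi_{n}$ of $\varphi$ from empty premises, so that $\varphi_{n}=\varphi$, and claim that $\varphi_{1}^{\sigma},\ldots,\varphi_{n}^{\sigma}$ is again a formal proof of $\varphi^{\sigma}=\varphi_{n}^{\sigma}$ from empty premises. The verification splits into three cases corresponding to the three clauses in Definition~\ref{def:derivation-constrained}. For the axiom case, since sPC1--sPC3, sK, sdual are stated as schemata in the metavariables $\varphi,\psi,\chi$, any substitution instance of an axiom is itself an instance of the same schema; hence $\varphi_{i}^{\sigma}$ is an axiom whenever $\varphi_{i}$ is. For MP, I would invoke the homomorphism property $(\psi\to\chi)^{\sigma}=\psi^{\sigma}\to\chi^{\sigma}$, which is immediate from the inductive definition of $\varphi^{\sigma}$ in the footnote; so if $\varphi_{k}=\varphi_{j}\to\varphi_{i}$ then $\varphi_{k}^{\sigma}=\varphi_{j}^{\sigma}\to\varphi_{i}^{\sigma}$ and MP still applies in the substituted sequence. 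For RN, the corresponding clause $(\Box\psi)^{\sigma}=\Box\psi^{\sigma}$ gives $\varphi_{i}^{\sigma}=(\Box\varphi_{j})^{\sigma}=\Box\varphi_{j}^{\sigma}$, so RN still applies.

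Because the RN clause in the definition of derivation from empty premises only requires that the input formula occur earlier in the same sequence (no side condition about premises), no subtle bookkeeping about which substeps are themselves derivable from empty premises arises: the inductive check is uniform along the proof. Consequently $\varphi_{1}^{\sigma},\ldots,\varphi_{n}^{\sigma}$ is a formal proof of $\varphi^{\sigma}$ from empty premises, giving $\vdash_{\mathbf{K}^{b}}\varphi^{\sigma}$.

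I do not anticipate a genuine obstacle here; the only point that warrants care is making explicit that substitution is a homomorphism with respect to $\neg$, $\to$, and $\Diamond$ (and hence, via the abbreviation $\Box\psi=\neg\Diamond\neg\psi$, also with respect to $\Box$), so that each inference rule and each axiom schema is preserved under $(\cdot)^{\sigma}$. Once that is noted, the induction is routine, and the argument would have failed had we used the bounded-premise version of RN with a nonempty $\Gamma$, which is precisely why the theorem is stated at the level of $\vdash_{\mathbf{K}^{b}}\varphi$ rather than for arbitrary $\Gamma\vdash_{\mathbf{K}^{b}}\varphi$.
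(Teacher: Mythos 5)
Your proof is correct and takes essentially the approach the paper intends: the paper states this Substitution Theorem without proof (grouping it with the results it calls easily proved), and the evidently intended argument is exactly yours --- apply $\sigma$ pointwise to a formal proof from empty premises, using closure of the schemata sPC1--sPC3, sK, sdual under instantiation, the homomorphism of $(\cdot)^{\sigma}$ over $\neg$, $\to$, and $\Diamond$ (hence over the abbreviation $\Box=\neg\Diamond\neg$), and the fact that in empty-premise proofs the RN clause carries no side condition. The only quibble is your closing aside: for nonempty $\Gamma$ the natural analogue $\Gamma\vdash_{\mathbf{K}^{b}}\varphi$ implies $\Gamma^{\sigma}\vdash_{\mathbf{K}^{b}}\varphi^{\sigma}$ would in fact still go through (the embedded empty-premise subproofs required by the restricted RN substitute in the same uniform way), and what genuinely fails is substitution with $\Gamma$ held fixed, e.g.\ $p\vdash_{\mathbf{K}^{b}}p$ but $p\nvdash_{\mathbf{K}^{b}}q$ --- a harmless imprecision in a remark, not a gap in the proof.
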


The sixth way is called \emph{deflationary definition}. This definition
hides the rule of necessitation in axiom schemata, so that the set
of axioms is closed under prefixing $\Box$. Then the rule of necessitation
is redundant. According to \citep{Hakli2012}, this definition was
suggested in \citep{Smorynski1984} for provability logic. The idea
may come from some textbooks in mathematical logic (as \citep{Enderton2001})
, in which the rule of universal generalization is hidden in axiom
schemata so that the set of axioms is closed under taking universal
quantification. The advantage of this definition is that we can follow
the standard definition of deductive consequence in classical propositional
logic without destroying the deduction theorem, and strong soundness
is also maintained. Unfortunately, no textbooks in modal logic adopts
this definition. That said, this definition also has a drawback. For
logics that does not contain $\mathrm{RN}$ but only weaker rules
like $\mathrm{RE}$, this definition is not applicable any more.

In the sequel, we take $\mathbf{K}$ as an example to give the precise
deflationary definition and show its equivalence to reduced definition
and bounded definition for reference of teaching.
\begin{defn}
[deflationary definition]\label{def:derivation-elimination}The
set of axioms $\Xi_{\mathbf{K}^{d}}$ of the axiomatic system $\mathbf{K}^{d}$
includes
\begin{lyxlist}{00.00.0000}
\item [{(PC1)}] $p\to(q\to p)$
\item [{(PC2)}] $(p\to(q\to r)\to((p\to q)\to(p\to r))$
\item [{(PC3)}] $(\neg p\to\neg q)\to(q\to p)$
\item [{(K)}] $\Box(p\to q)\to(\Box p\to\Box q)$
\item [{(dual)}] $\Diamond p\leftrightarrow\neg\Box\neg p$, and satisfies
the following conditions.
\end{lyxlist}
\begin{itemize}
\item It is closed under necessitation, i.e., if $\varphi\in\Xi_{\mathbf{K}^{d}}$
then $\Box\varphi\in\Xi_{\mathbf{K}^{d}}$.
\item It is closed under substitution, i.e., if $\varphi\in\Xi_{\mathbf{K}^{d}}$
then $\varphi^{\sigma}\in\Xi_{\mathbf{K}^{d}}$, where $\sigma$ is
any substitution.
\item No other formulas are in $\Xi_{\mathbf{K}^{d}}$.
\end{itemize}
The only rule of inference of $\mathbf{K}^{d}$ is 
\begin{lyxlist}{00.00.0000}
\item [{(MP)}] $\dfrac{\varphi,\varphi\to\psi}{\psi}$.
\end{lyxlist}
A finite sequence of formulas \emph{$\varphi_{1},\ldots,\varphi_{n}$}
is a formal proof of $\varphi$ from $\Gamma$ in $\mathbf{K}^{d}$\emph{,
if }$\varphi=\varphi_{n}$ and for each $\varphi_{i}$,
\begin{itemize}
\item either $\varphi_{i}\in\Gamma$, or
\item $\varphi_{i}\in\Xi_{\mathbf{K}^{d}}$, or 
\item there exist $j,k<i$ such that $\varphi_{k}=\varphi_{j}\to\varphi_{i}$,
i.e., $\varphi_{i}$ is obtained from proceeding formulas in the sequence
using modus ponens.
\end{itemize}
We say $\varphi$ is derivable from $\Gamma$ in $\mathbf{K}^{d}$,
denoted $\Gamma\vdash_{\mathbf{K}^{d}}\varphi$, if there is a formal
proof of $\varphi$ from $\Gamma$ in $\mathbf{K}^{d}.$ We abbreviate
$\emptyset\vdash_{\mathbf{K}^{d}}\varphi$ as $\vdash_{\mathbf{K}^{d}}\varphi$.

\end{defn}

\begin{thm}
[Deduction Theorem]If $\Gamma,\psi\vdash_{\mathbf{K}^{d}}\varphi$,
then $\Gamma\vdash_{\mathbf{K}^{d}}\psi\to\varphi$.
\end{thm}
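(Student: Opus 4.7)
The plan is induction on the length of a formal proof witnessing $\Gamma, \psi \vdash_{\mathbf{K}^{d}} \varphi$. Since the only rule of inference of $\mathbf{K}^{d}$ is $\mathrm{MP}$ (necessitation having been absorbed into the closure conditions on $\Xi_{\mathbf{K}^{d}}$), the classical Hilbert-style argument for the deduction theorem in propositional logic transfers almost verbatim. This is precisely the motivation for the deflationary definition: by requiring $\Xi_{\mathbf{K}^{d}}$ to be closed under prefixing $\Box$ and under substitution, we avoid ever having to track a separate $\mathrm{RN}$ step inside a proof with nonempty premises.

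For the base cases, suppose the sequence has length $1$, so $\varphi$ is the first (and only) line. If $\varphi = \psi$, then the goal is $\Gamma \vdash_{\mathbf{K}^{d}} \psi \to \psi$, which follows from instances of (PC1) and (PC2) together with two applications of $\mathrm{MP}$, as in the propositional case; this can be packaged as a separate lemma analogous to Lemma~\ref{lem:varphi-varphi}. If instead $\varphi \in \Gamma$ or $\varphi \in \Xi_{\mathbf{K}^{d}}$, then the instance $\varphi \to (\psi \to \varphi)$ of (PC1) lies in $\Xi_{\mathbf{K}^{d}}$ (since $\Xi_{\mathbf{K}^{d}}$ is closed under substitution), and one application of $\mathrm{MP}$ from $\Gamma$ (or from $\Xi_{\mathbf{K}^{d}}$) yields $\Gamma \vdash_{\mathbf{K}^{d}} \psi \to \varphi$.

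For the inductive step, the only possibility beyond the base cases is that $\varphi$ is obtained by $\mathrm{MP}$ from two earlier lines $\chi$ and $\chi \to \varphi$. By the inductive hypothesis applied to the shorter initial segments, $\Gamma \vdash_{\mathbf{K}^{d}} \psi \to \chi$ and $\Gamma \vdash_{\mathbf{K}^{d}} \psi \to (\chi \to \varphi)$. The instance $(\psi \to (\chi \to \varphi)) \to ((\psi \to \chi) \to (\psi \to \varphi))$ of (PC2) belongs to $\Xi_{\mathbf{K}^{d}}$, so two applications of $\mathrm{MP}$ produce $\Gamma \vdash_{\mathbf{K}^{d}} \psi \to \varphi$, as desired. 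Concatenating the witnessing proofs gives an explicit formal proof from $\Gamma$.

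There is no real obstacle here; the content of the theorem is that the deflationary trick actually works. The only point needing attention is verifying that each auxiliary axiom used in the induction (the $\psi \to \psi$ construction, the relevant instances of (PC1) and (PC2)) genuinely lies in $\Xi_{\mathbf{K}^{d}}$. This is immediate from the closure under substitution built into Definition~\ref{def:derivation-elimination}, so no new technical ingredient is required beyond what is used in the classical proof.
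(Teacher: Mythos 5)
Your proof is correct: the paper states this theorem without proof, and your argument is exactly the intended one---since $\mathrm{MP}$ is the only rule of inference of $\mathbf{K}^{d}$ (necessitation being absorbed into the closure conditions on $\Xi_{\mathbf{K}^{d}}$), the classical Hilbert-style induction on proof length goes through verbatim, with closure under substitution guaranteeing that the needed instances of (PC1) and (PC2) lie in $\Xi_{\mathbf{K}^{d}}$. Your base cases and inductive step are complete, and your observation that no $\mathrm{RN}$ case ever arises in a proof from premises is precisely the point of the deflationary definition.
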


\begin{thm}
[Compactness]$\Gamma\vdash_{\mathbf{K}^{d}}\varphi$ iff there exists
a finite subset $\Phi$ of $\Gamma$ such that $\Phi\vdash_{\mathbf{K}^{d}}\varphi$.
\end{thm}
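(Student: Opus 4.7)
The plan is to prove the two directions separately, observing that because the deflationary system $\mathbf{K}^{d}$ has \textbf{MP} as its only rule of inference (with \textbf{RN} absorbed into $\Xi_{\mathbf{K}^{d}}$ via closure under necessitation), the argument reduces to the usual classical-propositional compactness proof, where one simply reads off a finite set of premises from any given finite formal proof.

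For the $(\Leftarrow)$ direction, assume $\Phi\subseteq\Gamma$ is finite and $\Phi\vdash_{\mathbf{K}^{d}}\varphi$, witnessed by a formal proof $\varphi_{1},\ldots,\varphi_{n}$. Since $\Phi\subseteq\Gamma$, every clause of Definition~\ref{def:derivation-elimination} satisfied relative to $\Phi$ is also satisfied relative to $\Gamma$ (premises in $\Phi$ are automatically premises in $\Gamma$; axiom and MP justifications are unaffected). So the same sequence witnesses $\Gamma\vdash_{\mathbf{K}^{d}}\varphi$.

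For the $(\Rightarrow)$ direction, assume $\Gamma\vdash_{\mathbf{K}^{d}}\varphi$ with formal proof $\varphi_{1},\ldots,\varphi_{n}$. Put $\Phi=\{\varphi_{i}\mid 1\leq i\leq n\}\cap\Gamma$; this is finite (as a subset of an $n$-element set) and $\Phi\subseteq\Gamma$. I claim that the very same sequence $\varphi_{1},\ldots,\varphi_{n}$ is a formal proof of $\varphi$ from $\Phi$. I would verify this by walking through the three clauses of Definition~\ref{def:derivation-elimination} for each $\varphi_{i}$: if $\varphi_{i}$ was originally justified as an element of $\Gamma$, then $\varphi_{i}\in\Phi$ by construction; if $\varphi_{i}\in\Xi_{\mathbf{K}^{d}}$, the same justification applies unchanged; if $\varphi_{i}$ was obtained by \textbf{MP} from earlier $\varphi_{j}$ and $\varphi_{k}=\varphi_{j}\to\varphi_{i}$, these formulas still occur earlier in the sequence. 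Hence $\Phi\vdash_{\mathbf{K}^{d}}\varphi$.

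There is essentially no obstacle here, and this is the precise point at which deflationary definition pays off compared with bounded definition: in $\mathbf{K}^{b}$ the \textbf{RN} clause demands an auxiliary proof from the empty set of premises, so one would have to be careful that the extracted finite subset does not corrupt any such sub-proof. In $\mathbf{K}^{d}$, necessitation is no longer a rule at all, so the extraction argument is entirely local to the sequence and matches the classical propositional case verbatim.
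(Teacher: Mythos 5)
Your proof is correct and is exactly the routine argument the paper has in mind (it states this compactness result without proof, as a straightforward consequence of $\mathbf{K}^{d}$ having only finite proof sequences and \textbf{MP} as its sole rule): extract the finitely many premise-justified formulas from a given derivation for $(\Rightarrow)$, and use monotonicity of the proof predicate for $(\Leftarrow)$. Your closing remark is also apt, though slightly overstated --- in $\mathbf{K}^{b}$ the extraction still goes through because the embedded \textbf{RN} sub-proofs from empty premises are unaffected by shrinking the premise set; the deflationary definition merely makes this verification trivial rather than newly possible.
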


\begin{cor}
[Derivation Theorem]\label{cor:K^e-derivation}$\Gamma\vdash_{\mathbf{K}^{d}}\varphi$
iff there exists a finite subset $\Phi$ of $\Gamma$ such that $\vdash_{\mathbf{K}^{d}}\bigwedge\Phi\to\varphi$.
\end{cor}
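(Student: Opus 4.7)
The plan is to derive this corollary as a direct combination of the Deduction Theorem and the Compactness Theorem just stated for $\mathbf{K}^{d}$, together with standard propositional manipulation (all instances of PC1--PC3 are available in $\Xi_{\mathbf{K}^{d}}$). The proof splits into the two directions of the biconditional.

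For the left-to-right direction, I would first invoke Compactness (Theorem~\ref{thm:K^b-compactness}'s analogue for $\mathbf{K}^{d}$) to obtain a finite subset $\Phi = \{\psi_{1},\ldots,\psi_{n}\} \subseteq \Gamma$ with $\Phi \vdash_{\mathbf{K}^{d}} \varphi$. Then I would apply the Deduction Theorem $n$ times, discharging $\psi_{n}, \psi_{n-1}, \ldots, \psi_{1}$ in turn, to get $\vdash_{\mathbf{K}^{d}} \psi_{1} \to (\psi_{2} \to \cdots \to (\psi_{n} \to \varphi)\cdots)$. A purely propositional step then collapses this iterated implication into $\vdash_{\mathbf{K}^{d}} \bigwedge\Phi \to \varphi$, using the tautology $(\chi_{1} \to (\chi_{2} \to \chi_{3})) \leftrightarrow ((\chi_{1}\wedge\chi_{2}) \to \chi_{3})$ and induction on $n$. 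The edge case $\Phi = \emptyset$ is handled by the stated convention $\bigwedge\emptyset = \top$: from $\vdash_{\mathbf{K}^{d}} \varphi$ the tautology $\varphi \to (\top \to \varphi)$ together with MP gives $\vdash_{\mathbf{K}^{d}} \top \to \varphi$.

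For the right-to-left direction, suppose $\Phi \subseteq \Gamma$ is finite with $\vdash_{\mathbf{K}^{d}} \bigwedge\Phi \to \varphi$. Iterated conjunction introduction in PC yields $\Phi \vdash_{\mathbf{K}^{d}} \bigwedge\Phi$, and one application of MP with the assumed theorem gives $\Phi \vdash_{\mathbf{K}^{d}} \varphi$. Monotonicity of $\vdash_{\mathbf{K}^{d}}$, which is immediate from the definition (any formal proof from $\Phi$ is by inspection a formal proof from any $\Gamma \supseteq \Phi$), then yields $\Gamma \vdash_{\mathbf{K}^{d}} \varphi$, closing the biconditional.

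No genuinely new modal-logical argument is required; everything reduces to the two preceding theorems plus PC bookkeeping. The only mildly delicate point is the propositional step that converts iterated implications into a conjunction-implication while staying inside $\mathbf{K}^{d}$: this must be justified by exhibiting the relevant tautologies as elements of $\Xi_{\mathbf{K}^{d}}$ (they are substitution instances of propositional theorems, and $\Xi_{\mathbf{K}^{d}}$ is closed under substitution by definition), together with the $\Phi = \emptyset$ corner case. Neither closure under necessitation nor the K axiom plays any role in this corollary.
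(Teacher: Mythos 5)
Your proof is correct and follows exactly the route the paper intends: the corollary is stated there without proof, immediately after the Deduction Theorem and Compactness for $\mathbf{K}^{d}$, and your derivation (Compactness to get a finite $\Phi$, iterated discharge via the Deduction Theorem, propositional collapse of the iterated implication, and the easy converse via $\Phi\vdash_{\mathbf{K}^{d}}\bigwedge\Phi$, MP, and monotonicity, with the $\Phi=\emptyset$ convention handled) is precisely the expected argument. One small repair to your final paragraph: the tautologies you invoke, such as $(\chi_{1}\to(\chi_{2}\to\chi_{3}))\leftrightarrow((\chi_{1}\wedge\chi_{2})\to\chi_{3})$, are not \emph{elements} of $\Xi_{\mathbf{K}^{d}}$, since they are not substitution instances of PC1--PC3 (nor of K or dual, nor box-closures thereof); rather they are \emph{theorems} of $\mathbf{K}^{d}$, because PC1--PC3 under MP with substitution-closed axioms is complete for classical propositional logic over the defined connectives $\wedge$, $\top$, $\leftrightarrow$, and theorems serve your MP steps just as well as axioms would.
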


The derivation theorem indicates that the deduction relation $\vdash_{\mathbf{K}^{d}}$
can be characterized by the theorems of $\mathbf{K}^{d}$.
\begin{thm}
[Generalization Theorem]If $\Gamma\vdash_{\mathbf{K}^{d}}\varphi$
then $\Box\Gamma\vdash_{\mathbf{K}^{d}}\Box\varphi$.
\end{thm}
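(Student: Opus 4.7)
The plan is to prove the generalization theorem by induction on the length of a formal proof of $\varphi$ from $\Gamma$ in $\mathbf{K}^{d}$. Suppose $\varphi_{1},\ldots,\varphi_{n}$ is such a proof, with $\varphi_{n}=\varphi$. I will show by strong induction on $i$ that $\Box\Gamma\vdash_{\mathbf{K}^{d}}\Box\varphi_{i}$; applying this at $i=n$ gives the theorem. The three cases mirror the three clauses in the definition of a formal proof in $\mathbf{K}^{d}$.

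If $\varphi_{i}\in\Gamma$, then $\Box\varphi_{i}\in\Box\Gamma$, so $\Box\Gamma\vdash_{\mathbf{K}^{d}}\Box\varphi_{i}$ is immediate. If $\varphi_{i}\in\Xi_{\mathbf{K}^{d}}$, then by the closure of $\Xi_{\mathbf{K}^{d}}$ under necessitation (the clause of Definition~\ref{def:derivation-elimination} that is exactly designed to absorb RN into the axiom set), we have $\Box\varphi_{i}\in\Xi_{\mathbf{K}^{d}}$, so $\vdash_{\mathbf{K}^{d}}\Box\varphi_{i}$ and hence $\Box\Gamma\vdash_{\mathbf{K}^{d}}\Box\varphi_{i}$.

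The MP case is where axiom K does its work. Suppose $\varphi_{i}$ is obtained from $\varphi_{j}$ and $\varphi_{k}=\varphi_{j}\to\varphi_{i}$ by MP, with $j,k<i$. By the induction hypothesis, $\Box\Gamma\vdash_{\mathbf{K}^{d}}\Box\varphi_{j}$ and $\Box\Gamma\vdash_{\mathbf{K}^{d}}\Box(\varphi_{j}\to\varphi_{i})$. Closure of $\Xi_{\mathbf{K}^{d}}$ under substitution turns the propositional-variable version of K into the schematic instance $\Box(\varphi_{j}\to\varphi_{i})\to(\Box\varphi_{j}\to\Box\varphi_{i})$, which is therefore an axiom of $\mathbf{K}^{d}$. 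Two applications of MP then yield $\Box\Gamma\vdash_{\mathbf{K}^{d}}\Box\varphi_{i}$, completing the induction.

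I do not expect any step to be a genuine obstacle: each clause of the proof-definition is matched by exactly one clause in Definition~\ref{def:derivation-elimination} (premise, axiom, or MP), and the two closure conditions on $\Xi_{\mathbf{K}^{d}}$ are precisely what make the axiom and MP cases go through. The only point that deserves explicit mention in the write-up is why the schematic instance of K is available, since K is stated in the form $\Box(p\to q)\to(\Box p\to\Box q)$ with propositional variables; this is exactly the role of the substitution-closure condition, and it is the feature of deflationary definition that lets us discharge $\mathrm{RN}$ while still obtaining a genuine generalization principle for deductions from arbitrary premise sets.
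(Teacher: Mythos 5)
Your proof is correct. The paper states the Generalization Theorem without proof (it is one of the results listed as easily proved), and your induction on the length of the derivation is exactly the intended argument: premises are absorbed by $\Box\Gamma$, the axiom case uses closure of $\Xi_{\mathbf{K}^{d}}$ under necessitation, and the MP case uses the substitution-closure of $\Xi_{\mathbf{K}^{d}}$ to obtain the instance $\Box(\varphi_{j}\to\varphi_{i})\to(\Box\varphi_{j}\to\Box\varphi_{i})$ of K followed by two applications of MP, which is precisely what the deflationary definition is designed to make work.
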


The following results is easily obtained from the generalization theorem.
\begin{cor}
\label{cor:K^e-admissible-rule}$\mathbf{K}^{d}$ has the following
admissible rules.\footnote{\label{fn:admissible-rule}A rule $\dfrac{\Gamma}{\varphi}$ is admissible
in $\mathbf{S}$, if for any substitution $\sigma$, $\vdash_{\mathbf{S}}\Gamma^{\sigma}$
implies $\vdash_{\mathbf{S}}\varphi^{\sigma}$, where $\vdash_{\mathbf{S}}\Gamma^{\sigma}$
means $\vdash_{\mathbf{S}}\psi^{\sigma}$ for all $\psi\in\Gamma$.}
\begin{enumerate}
\item $\dfrac{\varphi}{\Box\varphi}$
\item $\dfrac{\varphi\to\psi}{\Box\varphi\to\Box\psi}$
\item $\dfrac{\varphi\leftrightarrow\psi}{\Box\varphi\leftrightarrow\Box\psi}$
\item $\dfrac{\varphi_{1}\land\ldots\land\varphi_{n}\to\varphi}{\Box\varphi_{1}\land\ldots\land\Box\varphi_{n}\to\Box\varphi}$
\end{enumerate}
\end{cor}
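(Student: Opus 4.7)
The plan is to derive each of the four admissible rules in turn, leaning principally on the Generalization Theorem together with axiom (K) and purely propositional reasoning, and to handle the substitution clause in the definition of admissibility at the very start so that the rest of the argument can treat it as a side condition.

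Concretely, fix an arbitrary substitution $\sigma$ throughout. Since substitution commutes with $\Box$ and with the propositional connectives (by the inductive definition of $(\cdot)^{\sigma}$), I have $(\Box\varphi)^{\sigma}=\Box\varphi^{\sigma}$, $(\varphi\to\psi)^{\sigma}=\varphi^{\sigma}\to\psi^{\sigma}$, and so on. For rule (1), I would assume $\vdash_{\mathbf{K}^{d}}\varphi^{\sigma}$ and apply the Generalization Theorem with $\Gamma=\emptyset$ to conclude $\vdash_{\mathbf{K}^{d}}\Box\varphi^{\sigma}=(\Box\varphi)^{\sigma}$, which is exactly the admissibility of (1).

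For rule (2), from $\vdash_{\mathbf{K}^{d}}(\varphi\to\psi)^{\sigma}$, i.e.\ $\vdash_{\mathbf{K}^{d}}\varphi^{\sigma}\to\psi^{\sigma}$, I first apply (1) to obtain $\vdash_{\mathbf{K}^{d}}\Box(\varphi^{\sigma}\to\psi^{\sigma})$; then, using that $\Xi_{\mathbf{K}^{d}}$ is closed under substitution, the instance $\Box(\varphi^{\sigma}\to\psi^{\sigma})\to(\Box\varphi^{\sigma}\to\Box\psi^{\sigma})$ of (K) is an axiom, and one application of MP yields $\vdash_{\mathbf{K}^{d}}(\Box\varphi\to\Box\psi)^{\sigma}$. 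For rule (3), I would unfold $\leftrightarrow$ into the conjunction of the two implications (via the derived propositional equivalences that follow from the derivation theorem, Corollary~\ref{cor:K^e-derivation}) and apply rule (2) in each direction.

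Rule (4) is the only step with a nontrivial ingredient beyond (1)–(3): I need the standard $\mathbf{K}$-theorem $\vdash_{\mathbf{K}^{d}}\Box\varphi_{1}\land\ldots\land\Box\varphi_{n}\to\Box(\varphi_{1}\land\ldots\land\varphi_{n})$, which I would establish by induction on $n$ from (K), (1), and propositional reasoning (available through the derivation theorem). Combining this with rule (2) applied to the hypothesis $\varphi_{1}\land\ldots\land\varphi_{n}\to\varphi$ gives $\Box(\varphi_{1}\land\ldots\land\varphi_{n})\to\Box\varphi$, and chaining the two implications delivers the conclusion, under any substitution $\sigma$. The only mildly delicate point is that the Generalization Theorem is formulated for deducibility, not for admissibility with substitutions, so I have to be careful to first substitute and then generalize; but since $\Xi_{\mathbf{K}^{d}}$ is closed under substitution and substitution commutes with all connectives, this is purely bookkeeping rather than a genuine obstacle.
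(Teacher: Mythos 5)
Your proof is correct, but it decomposes the problem differently from what the paper intends. The paper gives no written proof at all: it merely remarks that the corollary ``is easily obtained from the generalization theorem,'' and the natural reading is a \emph{uniform} argument for all four rules: fix $\sigma$; from $\vdash_{\mathbf{K}^{d}}(\varphi_{1}\land\ldots\land\varphi_{n}\to\varphi)^{\sigma}$ the Derivation Theorem (Corollary~\ref{cor:K^e-derivation}) yields $\{\varphi_{1}^{\sigma},\ldots,\varphi_{n}^{\sigma}\}\vdash_{\mathbf{K}^{d}}\varphi^{\sigma}$; the Generalization Theorem lifts this to $\{\Box\varphi_{1}^{\sigma},\ldots,\Box\varphi_{n}^{\sigma}\}\vdash_{\mathbf{K}^{d}}\Box\varphi^{\sigma}$; and the Derivation Theorem converts back to $\vdash_{\mathbf{K}^{d}}(\Box\varphi_{1}\land\ldots\land\Box\varphi_{n}\to\Box\varphi)^{\sigma}$, with (1), (2), (3) being the cases $n=0$, $n=1$, and two applications of the $n=1$ case. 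You instead invoke the Generalization Theorem only once, for rule (1) with $\Gamma=\emptyset$, and then build (2)--(4) syntactically from substitution instances of (K), MP, and a separate induction establishing $\vdash_{\mathbf{K}^{d}}\Box\varphi_{1}\land\ldots\land\Box\varphi_{n}\to\Box(\varphi_{1}\land\ldots\land\varphi_{n})$. Both routes work; the paper's route never needs that distribution lemma, because the K-reasoning it encodes is exactly what the Generalization Theorem has already internalized, while your route makes visible where axiom (K) does its work, at the cost of the extra induction. Two points of polish: in the paper's language $\mathcal{L}_{\Diamond}$ the operator $\Box$ is an abbreviation of $\neg\Diamond\neg$, and the substitution clauses are given only for $\neg$, $\to$, $\Diamond$, so the identity $(\Box\varphi)^{\sigma}=\Box\varphi^{\sigma}$ you rely on should be checked through the abbreviation (it does hold, via the $\neg$ and $\Diamond$ clauses); and for unfolding $\leftrightarrow$ in rule (3), what licenses the propositional steps is not the derivation theorem but the fact that every substitution instance of a PC tautology is a theorem of $\mathbf{K}^{d}$ (since $\Xi_{\mathbf{K}^{d}}$ contains PC1--PC3 closed under substitution and MP is available), a small misattribution that does not affect soundness.
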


\begin{thm}
\label{thm:equivalence}$\vdash_{\mathbf{K}^{r}}\ =\ \vdash_{\mathbf{K}^{b}}\ =\ \vdash_{\mathbf{K}^{d}}$
\end{thm}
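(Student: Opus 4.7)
The plan is to reduce the equivalence of the three deduction relations to the equivalence of their sets of theorems, and then match axioms and rules pairwise. The reduction rests on the three derivation theorems: for $\mathbf{K}^{r}$, $\Gamma\vdash_{\mathbf{K}^{r}}\varphi$ is defined to be $\vdash_{\mathbf{K}^{r}}\bigwedge\Phi\to\varphi$ for some finite $\Phi\subseteq\Gamma$; for $\mathbf{K}^{b}$ this is Corollary~\ref{cor:K^b-derivation}; for $\mathbf{K}^{d}$ this is Corollary~\ref{cor:K^e-derivation}. So once I show, for every formula $\chi$, that $\vdash_{\mathbf{K}^{r}}\chi\iff\vdash_{\mathbf{K}^{b}}\chi\iff\vdash_{\mathbf{K}^{d}}\chi$, the full statement follows by applying these three derivation theorems to the same finite set $\Phi\subseteq\Gamma$.

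For $\mathbf{K}^{r}\Leftrightarrow\mathbf{K}^{b}$, the only structural difference is that $\mathbf{K}^{r}$ has fixed propositional-variable axioms together with $\mathrm{US}$ as a primitive rule, whereas $\mathbf{K}^{b}$ uses axiom schemata and has no $\mathrm{US}$. In the direction $\mathbf{K}^{b}\Rightarrow\mathbf{K}^{r}$, each schema instance of $\mathbf{K}^{b}$ is obtainable from the corresponding $\mathbf{K}^{r}$-axiom by one application of $\mathrm{US}$, and $\mathrm{MP}$ and $\mathrm{RN}$ are shared; so a $\mathbf{K}^{b}$-proof translates line-by-line. In the direction $\mathbf{K}^{r}\Rightarrow\mathbf{K}^{b}$, every $\mathbf{K}^{r}$-axiom is itself a schema instance, $\mathrm{MP}$ and $\mathrm{RN}$ are shared, and each $\mathrm{US}$ step can be eliminated using the admissibility of substitution in $\mathbf{K}^{b}$, namely Theorem~\ref{thm:K^b-substitution}: replace the derived sequence proving $\varphi$ by the derived sequence proving $\varphi^{\sigma}$ whose existence that theorem guarantees.

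For $\mathbf{K}^{b}\Leftrightarrow\mathbf{K}^{d}$, the structural difference is how $\mathrm{RN}$ is handled: $\mathbf{K}^{b}$ keeps it as a rule, while $\mathbf{K}^{d}$ absorbs both $\mathrm{US}$ and $\mathrm{RN}$ into the closure conditions on the axiom set $\Xi_{\mathbf{K}^{d}}$. From $\mathbf{K}^{d}$ to $\mathbf{K}^{b}$, an induction on the inductive definition of $\Xi_{\mathbf{K}^{d}}$ shows that every axiom of $\mathbf{K}^{d}$ is a $\mathbf{K}^{b}$-theorem: the base formulas are instances of the $\mathbf{K}^{b}$-schemata, closure under $\mathrm{US}$ is handled by Theorem~\ref{thm:K^b-substitution}, and closure under $\mathrm{RN}$ is the $\mathbf{K}^{b}$-rule itself; since $\mathrm{MP}$ is the only other rule and is shared, a $\mathbf{K}^{d}$-proof translates directly. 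From $\mathbf{K}^{b}$ to $\mathbf{K}^{d}$, proceed by induction on a $\mathbf{K}^{b}$-proof from empty premises. Axiom and $\mathrm{MP}$ steps are immediate; the only nontrivial case is an $\mathrm{RN}$-step producing $\Box\varphi_{j}$ from an earlier line $\varphi_{j}$, and here the induction hypothesis gives $\vdash_{\mathbf{K}^{d}}\varphi_{j}$ and then the Generalization Theorem (equivalently Corollary~\ref{cor:K^e-admissible-rule}(1) with identity substitution) yields $\vdash_{\mathbf{K}^{d}}\Box\varphi_{j}$.

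The main obstacle is this last step: simulating the primitive rule $\mathrm{RN}$ of $\mathbf{K}^{b}$ inside $\mathbf{K}^{d}$, whose sole rule is $\mathrm{MP}$. The Generalization Theorem makes this essentially free, but philosophically this is where all the work of absorbing $\mathrm{RN}$ into the axioms via the closure condition pays off; everything else is bookkeeping about axiom schemata versus axioms-plus-substitution.
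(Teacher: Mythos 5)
Your proposal is correct and takes essentially the same approach as the paper: the same reduction from consequence to theoremhood via the derivation theorems, followed by the same inductions on proof length, using Theorem~\ref{thm:K^b-substitution} to eliminate $\mathrm{US}$ steps and Corollary~\ref{cor:K^e-admissible-rule}(1) to simulate $\mathrm{RN}$ in $\mathbf{K}^{d}$. The only difference is organizational: you establish four inclusions pairwise, whereas the paper gets away with three by closing the cycle $\vdash_{\mathbf{K}^{r}}\ \subseteq\ \vdash_{\mathbf{K}^{b}}\ \subseteq\ \vdash_{\mathbf{K}^{d}}\ \subseteq\ \vdash_{\mathbf{K}^{r}}$ (its last leg runs the same further induction on $\Xi_{\mathbf{K}^{d}}$ that you use, just targeting $\mathbf{K}^{r}$ instead of $\mathbf{K}^{b}$).
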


\begin{proof}
We prove $\vdash_{\mathbf{K}^{r}}\ \subseteq\ \vdash_{\mathbf{K}^{b}}\ \subseteq\ \vdash_{\mathbf{K}^{e}}\ \subseteq\ \vdash_{\mathbf{K}^{r}}$.

$\vdash_{\mathbf{K}^{r}}\ \subseteq\ \vdash_{\mathbf{K}^{b}}$: By
Corollary~\ref{cor:K^b-derivation}, it suffices to prove that for
all $\varphi\in\mathcal{L}_{\Diamond}$, if $\vdash_{\mathbf{K}^{r}}\varphi$
then $\vdash_{\mathbf{K}^{b}}\varphi$. By induction on the length
of the formal proof of $\varphi$ in $\mathbf{K}^{r}$. If $\varphi$
is an axiom of $\mathbf{K}^{r}$, then it is also an instance of an
axiom schema of $\mathbf{K}^{b}$. Thus $\vdash_{\mathbf{K}^{b}}\varphi$.
If $\varphi$ is obtained from $\psi$ and $\psi\to\varphi$ using
MP, then by the inductive hypothesis, $\vdash_{\mathbf{K}^{b}}\psi$
and $\vdash_{\mathbf{K}^{b}}\psi\to\varphi$. By MP of $\mathbf{K}^{b}$,
it follows that $\vdash_{\mathbf{K}^{b}}\varphi$. If $\varphi=\Box\psi$
is obtained from $\psi$ using RN, then by the inductive hypothesis,
$\vdash_{\mathbf{K}^{b}}\psi$. Then $\psi$ is derived from empty
premises. By the restricted RN of $\mathbf{K}^{b}$, $\vdash_{\mathbf{K}^{b}}\Box\psi$,
i.e., $\vdash_{\mathbf{K}^{b}}\varphi$. If $\varphi=\psi^{\sigma}$
is obtained from $\psi$ by US, by the inductive hypothesis, $\vdash_{\mathbf{K}^{b}}\psi$.
By Theorem~\ref{thm:K^b-substitution}, $\vdash_{\mathbf{K}^{b}}\psi^{\sigma}$,
i.e., $\vdash_{\mathbf{K}^{b}}\varphi$.

$\vdash_{\mathbf{K}^{b}}\ \subseteq\ \vdash_{\mathbf{K}^{d}}$: By
Corollary ~\ref{cor:K^b-derivation} and Corollary~\ref{cor:K^e-derivation},
it suffices to prove that for all $\varphi\in\mathcal{L}_{\Diamond}$,
if $\vdash_{\mathbf{K}^{b}}\varphi$ then $\vdash_{\mathbf{K}^{d}}\varphi$.
By induction on the length of the formal proof of $\varphi$ in $\mathbf{K}^{b}$.
If $\varphi$ is an instance of an axiom schema of $\mathbf{K}^{b}$,
then it is also an instance of an axiom schema of $\mathbf{K}^{d}$.
Thus $\vdash_{\mathbf{K}^{d}}\varphi$. If $\varphi$ is obtained
from $\psi$ and $\psi\to\varphi$ using MP, then by the inductive
hypothesis, $\vdash_{\mathbf{K}^{d}}\psi$ and $\vdash_{\mathbf{K}^{d}}\psi\to\varphi$.
By MP of $\mathbf{K}^{d}$, $\vdash_{\mathbf{K}^{d}}\varphi$. If
$\varphi=\Box\psi$ is obtained from $\psi$ using RN, then by the
inductive hypothesis, $\vdash_{\mathbf{K}^{d}}\psi$. By Corollary~\ref{cor:K^e-admissible-rule}(1),
$\vdash_{\mathbf{K}^{d}}\Box\psi$, i.e., $\vdash_{\mathbf{K}^{d}}\varphi$.

$\vdash_{\mathbf{K}^{d}}\ \subseteq\ \vdash_{\mathbf{K}^{r}}$: By
Corollary~\ref{cor:K^e-derivation}, it suffices to prove that for
all $\varphi\in\mathcal{L}_{\Diamond}$, if $\vdash_{\mathbf{K}^{d}}\varphi$
then $\vdash_{\mathbf{K}^{r}}\varphi$. By induction on the length
of the formal proof of $\varphi$ in $\mathbf{K}^{d}$. If $\varphi\in\Xi_{\mathbf{K}^{d}}$,
then we prove by further induction on $\varphi$. If $\varphi\in\{\mathrm{PC1,PC2,PC3,K,dual}\}$,
then it is also an axiom of $\mathbf{K}^{r}$. Thus $\vdash_{\mathbf{K}^{r}}\varphi$.
If $\varphi=\Box\psi$ and $\psi\in\Xi_{\mathbf{K}^{d}}$, then by
the inductive hypothesis, $\vdash_{\mathbf{K}^{r}}\psi$. Then by
RN of $\mathbf{K}^{r}$, $\vdash_{\mathbf{K}^{r}}\Box\psi$, i.e.,
$\vdash_{\mathbf{K}^{r}}\varphi$. If $\varphi=\psi^{\sigma}$ and
$\psi\in\Xi_{\mathbf{K}^{d}},$then by the inductive hypothesis, $\vdash_{\mathbf{K}^{r}}\psi$.
Then by US of $\mathbf{K}^{r}$, $\vdash_{\mathbf{K}^{r}}\psi^{\sigma}$,
i.e., $\vdash_{\mathbf{K}^{r}}\varphi$. Now we finish the case of
$\varphi\in\Xi_{\mathbf{K}^{e}}$. If $\varphi$ is obtained from
$\psi$ and $\psi\to\varphi$ using MP, then by the inductive hypothesis,
$\vdash_{\mathbf{K}^{r}}\psi$ and $\vdash_{\mathbf{K}^{r}}\psi\to\varphi$.
By MP of $\mathbf{K}^{r}$, we have $\vdash_{\mathbf{K}^{r}}\varphi$.
\end{proof}
The theorem says that the reduced definition, bounded definition,
and deflationary definition are all equivalent, which may help students
to understand better the concept of deduction consequence in modal
logic. We can also adopt the most convenient one among them according
to the demand in applications or proofs in different situations.

\section{Conclusion}

We discuss four common mistakes in the teaching and textbooks of modal
logic. The first one is missing the axiom $\Diamond\varphi\leftrightarrow\neg\Box\neg\varphi$,
when choosing $\Diamond$ as the primitive modal operator, misunderstanding
that $\Box$ and $\Diamond$ are symmetric. The second one is forgetting
to make the set of formulas for filtration closed under subformulas,
when proving the finite model property through filtration, neglecting
that $\Box\varphi$ and $\Diamond\varphi$ may be abbreviations of
formulas. The third one is giving wrong definitions of canonical relations
in minimal canonical models that are unmatched with the primitive
modal operators. These three mistakes are all related to the choice
of primitive modal operators. The moral is that we can not take for
granted that $\Box$ and $\Diamond$ are symmetric and can be chosen
arbitrarily in certain definitions. They have to be carefully selected
according to which modal operator is taken as primitive.

The fourth mistake is misunderstanding the rule of necessitation,
without knowing its distinction from the rule of modus ponens. We
summarizes six methods of defining deductive consequence in modal
logic: omitted definition, classical definition, ternary definition,
reduced definition, bounded definition, and deflationary definition,
and show that the last three definitions are equivalent to each other.

%\bibliographystyle{plain}
%\bibliography{/Users/wxf/Nutstore/Documents/library}

\end{document}